\newtheorem{Theorem}{Theorem}[section]
\newtheorem{lem}[Theorem]{Lemma}
\newtheorem{Remark}[Theorem]{Remark}
\newtheorem{Definition}[Theorem]{Definition}
\newtheorem{Corollary}[Theorem]{Corollary}
\newtheorem{Proposition}[Theorem]{Proposition}
\newtheorem{Example}[Theorem]{Example}
\numberwithin{equation}{section}
\numberwithin{table}{section}
\begin{document}

\title{Polyadic Constacyclic Codes}

\insert\footins{\footnotesize {\it Email addresses}:
{bocong\_chen@yahoo.com (B. Chen)},
 hdinh@kent.edu (H. Q. Dinh),
  yfan@mail.ccnu.edu.cn (Y. Fan),
  lingsan@ntu.edu.sg (S. Ling). }

\author{Bocong Chen$^{\bf 1,2}$,~ Hai Q. Dinh$^{\bf 3}$,~
        Yun Fan$^{\bf 1}$,~ San Ling$^{\bf 2}$}

\date{\small $^{\bf 1}$ School of Mathematics and Statistics,
 Central China Normal University, Wuhan 430079, China\\
 $^{\bf 2}$ School of Physical \& Mathematical Sciences,
 Nanyang Technological University, Singapore 637616, Singapore\\
 $^{\bf 3}$ Department of Mathematical Sciences, Kent State University,
 4314 Mahoning Avenue, Warren, OH 44483, USA}

\maketitle

\begin{abstract} For any given positive integer $m$,
a necessary and sufficient condition for the existence
of Type I {$m$-adic} constacyclic codes is given. Further,
for any given integer $s$,
a necessary and sufficient condition for $s$ to be a multiplier
of a Type {\rm I} polyadic constacyclic code is given.
As an application,
some optimal codes from Type I polyadic constacyclic codes,
including generalized Reed-Solomon codes and alternant MDS codes,
are constructed.

\medskip
\textbf{Keywords:}
Polyadic constacyclic code, $p$-adic valuation,
generalized Reed-Solomon code, alternant code, MDS code.

\end{abstract}

\section{Introduction}
The class of duadic cyclic codes over finite fields, which includes
the important family of quadratic residue codes, was introduced by
Leon {\it et al.}  \cite{Leon}, and then studied  by several authors
such as {in \cite{Pless}, \cite{Smid}, \cite{Ding}, \cite{Ding2} and
\cite{Han}.

Motivated by the good properties of duadic cyclic codes, Pless and
Rushanan \cite{Pless2} moved on to study  triadic cyclic codes. The
class of polyadic cyclic codes (or $m$-adic cyclic codes) was later
introduced by Brualdi and Pless \cite{Brualdi}. Subsequently,
Rushanan {\it et al.} generalized duadic cyclic codes to duadic
abelian codes (\cite{Rushanan}, \cite{Ward}). Zhu {\it et al.}
further studied duadic group algebra codes (\cite{Zhu},
\cite{Zhang}, \cite{Aly}). Results on the existence conditions of
these codes have been obtained by these authors. Ling and Xing
\cite{Ling04} extended the definition of polyadic cyclic codes to
include noncyclic abelian codes, and obtained necessary and
sufficient conditions for the existence of nondegenerate polyadic
codes; some interesting examples arising from this family of codes
were also given. Sharma {\it et al.} \cite{Sharma} removed the
``nondegenerate" condition considered by Ling and Xing in
\cite{Ling04}, and  determined necessary and sufficient conditions
for the existence of polyadic cyclic codes of prime power length.

Another direction of generalization for the notion of duadic cyclic
codes is the study of polyadic constacyclic codes over finite
fields.  In \cite{Lim}, polyadic cyclic codes were generalized to
polyadic consta-abelian codes, and some sufficient conditions for
the existence of this class of codes were established. Duadic
negacyclic codes, which is a special class of polyadic constacyclic
codes, were considered by Blackford \cite{Blackford08}. Recently,
Blackford \cite{Blackford13} continued to study Type {\rm I} duadic
constacyclic codes (see Definition~\ref{def-adic} for
detail). Necessary and sufficient conditions for the existence of
Type {\rm I} duadic constacyclic codes were given and, for a given
integer $s$, equivalent conditions were also obtained to determine
whether or not $s$ can be a multiplier for a Type {\rm I} duadic
code. However, to the best of our knowledge, there are no known
solutions to the following general questions: for $n$ a positive
integer and $\lambda$ a nonzero element of the underlying field,
\begin{enumerate}

\item For any given positive integer $m$,
do Type~I $m$-adic $\lambda$-constacyclic codes of length $n$ exist?

\item For any given integer $s$, can $s$ be a multiplier of a
Type I polyadic $\lambda$-constacyclic code of length $n$?
\end{enumerate}

In this paper, a necessary and sufficient condition for the
existence of Type {\rm I} $m$-adic $\lambda$-constacyclic codes is
given. Further, a necessary and sufficient condition for $s$ to be a
multiplier of a Type {\rm I} polyadic $\lambda$-constacyclic code is
obtained. We also exhibit some optimal polyadic constacyclic codes,
including generalized Reed-Solomon codes and alternant MDS codes.

This paper is organized as follows. In Section 2,
basic notations and the main results of this paper are presented.
In Section 3, we prove some lemmas which play important roles
in the proofs of the main results.
In Section 4, the proofs of the main results are given.
In Section 5, several corollaries are derived from the main results,
and some optimal codes from Type I polyadic constacyclic codes
are constructed, including generalized Reed-Solomon codes and
alternant MDS codes.

\section{Notations and main results}
We denote by $\mathbb{F}_q$ the finite field with cardinality $|\mathbb{F}_q|=q$.
Let $\lambda \in \mathbb{F}_q^*$,  where $\mathbb{F}_q^*$ denotes
the multiplicative group of units of ${\Bbb F}_q$,
and let $n$ be a positive integer coprime to $q$.
Any ideal $C$ of the quotient ring
$\mathbb{F}_q[X]/\langle X^n-\lambda\rangle$ is said to be
a {\em $\lambda$-constacyclic code} over ${\Bbb F}_q$ of length $n$.
Let ${\rm ord}_{\mathbb{F}_q^*}(\lambda)=r$,
where ${\rm ord}_{\mathbb{F}_q^*}(\lambda)$ denotes the order of
$\lambda$ in the multiplicative group $\mathbb{F}_q^*$;
then $r\mid (q-1)$ since $\mathbb{F}_q^*$ is a cyclic group
of order $q-1$.
Thus a constacyclic code $C$ has three parameters $q,n,r$;
in this case we say that $C$ is a {\em $(q,n,r)$-constacyclic code}.
In this paper we always adopt the following notations:
\begin{itemize}
\item
$q$, $n$, $r$ with $\gcd(q,n)=1$ and $r|(q-1)$
are the parameters of the constacyclic code $C$,
where $\gcd(q,n)$ denotes the greatest common divisor;
\item
${\Bbb Z}_{rn}$ denotes the residue ring
of the integer ring ${\Bbb Z}$ modulo $rn$;
\item
${\Bbb Z}_{rn}^*$ denotes the multiplicative group
consisting of units of ${\Bbb Z}_{rn}$;
\item
$1+r{\Bbb Z}_{rn}
=\{1+rk\mid k=0,1,\cdots,n-1\} \subseteq {\Bbb Z}_{rn}$;
\item
$\mu_h$, where $\gcd(h,rn)=1$,
denotes the permutation of the set ${\Bbb Z}_{rn}$ given by
$\mu_h(x)=hx$ for $x\in{\Bbb Z}_{rn}$;
\item
$s$ is an integer such that $s\in{\Bbb Z}_{rn}^*\cap(1+r{\Bbb Z}_{rn})$,
and $m$ is a positive integer.
\end{itemize}

Let $e$ be the multiplicative order of  $q$ modulo $rn$, i.e.,
$rn\mid(q^e-1)$ but $rn\nmid(q^{e-1}-1)$.
Then, in the finite field $\mathbb{F}_{q^e}$, there is a primitive
$rn$th root $\omega$ of unity such that $\omega^n=\lambda$.
It is easy to check the following facts:
\begin{itemize}
\item[$\blacktriangleright$]
{\it $\omega^i$, $i\in(1+r{\Bbb Z}_{rn})$, are just all the roots of $X^n-\lambda$.}
\item[$\blacktriangleright$]
{\it
For an integer $h$ coprime to $rn$, the set $1+r{\Bbb Z}_{rn}$ is $\mu_h$-invariant
if and only if $h\in{\Bbb Z}_{rn}^*\cap(1+r{\Bbb Z}_{rn})$.}
\end{itemize}
Since $\gcd(q,n)=1$ and $r|(q-1)$,
it follows that $q\in {\Bbb Z}_{rn}^*\cap(1+r{\Bbb Z}_{rn})$ and
$1+r{\Bbb Z}_{rn}$ is $\mu_q$-invariant.
Let $(1+r{\Bbb Z}_{rn})/\mu_q$ denote the set of $\mu_q$-orbits
within $1+r{\Bbb Z}_{rn}$,
i.e., the set of $q$-cyclotomic cosets within~$1+r{\Bbb Z}_{rn}$.
For any $q$-cyclotomic coset $Q$ in ${\Bbb Z}_{rn}$,
the polynomial $M_Q(X)=\prod_{i\in Q}(X-\omega^i)$ is irreducible
in $\mathbb{F}_q[X]$. Thus
$$
X^n-\lambda=\prod_{Q\in(1+r{\Bbb Z}_{rn})/\mu_q}M_{Q}(X)
$$
is the monic irreducible decomposition of $X^n - \lambda$  in $\mathbb{F}_q[X]$.

\begin{Definition}\label{def-adic}\rm
If $1+r{\Bbb Z}_{rn}$ has a partition $1+r{\Bbb Z}_{rn}={\cal
X}_0\cup\cdots\cup{\cal X}_{m-1}$ such that, for some integer $s$,
every ${\cal X}_j$ is $\mu_q$-invariant and $\mu_s({\cal X}_j)={\cal
X}_{j+1}$ for $j=0,1,\cdots,m-1$ (the subscripts are taken modulo
$m$), then

\begin{itemize}

\item[(i)] the partition
$\big\{{\cal X}_0,{\cal X}_1,\cdots,{\cal X}_{m-1}\big\}$
is called a {\em Type I $m$-adic splitting} of $1+r{\Bbb Z}_{rn}$, and
$\mu_s$ is said to be a {\em multiplier} of the
Type I $m$-adic splitting;

\item[(ii)] the constacyclic codes $C_{{\cal X}_j}$, with
check polynomial $\prod_{Q\in{\cal X}_j/\mu_q}M_{Q}(X)$
for $j=0,1,\cdots,m-1$, are called
{\em Type I $m$-adic constacyclic codes} given by the multiplier $\mu_s$.

\end{itemize}
\end{Definition}

{\bf Remark.}~
In the case of Definition~\ref{def-adic}, the following map
(denoted by $\hat\mu_s$):
$$\hat\mu_s:\quad {\Bbb F}_q[X]/\langle X^n-\lambda\rangle \longrightarrow
{\Bbb F}_q[X]/\langle X^n-\lambda\rangle,\quad
\sum_{i=0}^{n-1}a_iX^i\longmapsto \sum_{i=0}^{n-1}a_iX^{is},
$$
is an isometry (i.e., $\hat\mu_s$ keeps both the algebraic structure
and the weight structure, see \cite[\S3]{CFLL})
of the quotient algebra ${\Bbb F}_q[X]/\langle X^n-\lambda\rangle$
such that
$\hat\mu_s(C_{{\cal X}_j})=C_{{\cal X}_{j+1}}$
for $j=0,1,\cdots,m-1$ and
${\Bbb F}_q[X]/\langle X^n-\lambda\rangle=\bigoplus_{j=0}^{m-1}C_{{\cal X}_j}$.
The map $\hat\mu_s$ is also called a {\em multiplier} of the quotient algebra,
e.g., see \cite[Theorem 4.3.12]{Huffman}.

\medskip
Of course, the notion of an $m$-adic splitting
  makes sense in practice only for $m>1$.
We allow that $m=1$ since it is convenient for the
statements of our results.
As mentioned in Section $1$, there are two fundamental
questions concerning Type~{\rm I} $m$-adic constacyclic codes:
\begin{itemize}
\item
Under what conditions do Type I $m$-adic $(q,n,r)$-constacyclic codes exist?
\item
For a given integer $s$, is $\mu_s$ a multiplier of a Type I $m$-adic splitting for $1+r{\Bbb Z}_{rn}$?
\end{itemize}
We address these two questions in this paper.

\medskip
Let $t$ be a non-zero integer. For any prime $p$, there is a unique
non-negative integer $\nu_p(t)$ such that $p^{\nu_p(t)}\Vert t$,
 i.e., $p^{\nu_p(t)}$ is the largest power of $p$
dividing $t$. The function $\nu_p(t)$ is well known as the {\em
$p$-adic valuation of $t$.} Of course,
$t=\pm\prod_{p}p^{\nu_p(t)}$, where $p$ runs over all primes, but
$\nu_p(t)=0$ for all except finitely many primes~$p$. We adopt the
convention that $\nu_p(0)=-\infty$ and $|\nu_p(0)|=\infty$.

The following two theorems are the main results of this paper.

\begin{Theorem}\label{theorem1}
There is a unique integer $M=\prod_{p}p^{\nu_p(M)}$ such that
 Type I $m$-adic $(q,n,r)$-constacyclic codes  exist
if and only if $m$ is a divisor of $M$,
where $\nu_p(M)$ is determined as follows:
if $p\nmid r$ or $p\nmid n$, then $\nu_p(M)=0$; otherwise:
\begin{itemize}
\item[\bf(i)] if $p$ is odd or $\nu_p(r)\ge 2$, then
      $\nu_p(M)=\min\{\nu_p(q-1)-\nu_p(r),\,\nu_p(n)\};$
\item[\bf(ii)] if $p=2$ and $\nu_2(r)=1$, there are two subcases:
 \begin{itemize}
  \item[\bf(ii.1)]
    if $\nu_2(q-1)\ge 2$, then
    $\nu_2(M)=\max\{\min\{\nu_2(q-1)-2,\,\nu_{2}(n)-1\},\,1\};$
  \item[\bf(ii.2)]
    if $\nu_2(q-1)=1$, then $\nu_2(M)=\min\{\nu_2(q+1)-1,\,\nu_{2}(n)-1\}$.
 \end{itemize}
\end{itemize}
\end{Theorem}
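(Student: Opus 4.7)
The plan is to reduce prime-by-prime via the Chinese Remainder Theorem (CRT), then read off $\nu_p(M)$ from the structure of the multiplicative group $1+p^a\mathbb{Z}_{p^{a+b}}$ at each relevant prime.

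Under CRT, $1+r\mathbb{Z}_{rn}\cong\prod_p S_p$ with $S_p=1+p^{\nu_p(r)}\mathbb{Z}_{p^{\nu_p(rn)}}$, and $\mu_q,\mu_s$ act coordinatewise. I would first show that trivial primes contribute nothing to $M$: if $p\nmid n$ then $m\mid n$ directly gives $\nu_p(m)=0$; and if $p\mid n$ but $p\nmid r$, then $S_p=\mathbb{Z}_{p^{\nu_p(n)}}$ contains the $\mu_q$- and $\mu_s$-fixed element $0$, and the global $\mu_q,\mu_s$-stable subset $T\subset 1+r\mathbb{Z}_{rn}$ of elements reducing to $0$ at $p$ has size $n/p^{\nu_p(n)}$, so $\mu_s$ cyclically permuting the $\mathcal{X}_j\cap T$ forces $m\mid|T|$. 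Hence only primes $p\mid\gcd(r,n)$ matter.

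For such a prime, with $a=\nu_p(r)\geq 1$ and $b=\nu_p(n)\geq 1$, every element of $S_p$ is a unit, so $S_p$ is a multiplicative group of order $p^b$ acting regularly on itself. The $\mu_q$-orbits in $S_p$ are then cosets of $H_p:=\langle q\rangle$, and every $\mu_{s_p}$-orbit on $S_p/H_p$ has the common length $\mathrm{ord}(s_pH_p)$, a power of $p$. Testing a global $\mu_q$-orbit whose local coordinate at each $p'\mid n$, $p'\nmid r$ equals the fixed point $\{0\}$ (local length $1$ there), the global length reduces to $\prod_{p\mid\gcd(r,n)}\mathrm{ord}(s_pH_p)$ (an lcm of coprime prime-powers), and divisibility by $m$ forces $p^{\nu_p(m)}\mid\mathrm{ord}(s_pH_p)$ at each relevant prime; conversely, given $m\mid M$ one assembles local $s_p$'s of the desired orders via CRT. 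Therefore $p^{\nu_p(M)}$ equals the largest element order of the abelian group $S_p/H_p$, and uniqueness of $M$ follows.

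To compute that largest element order I would apply the lifting-the-exponent identities $\nu_p(q^{p^k}-1)=\nu_p(q-1)+k$ and, when $p=2$ and $\nu_2(q-1)=1$, $\nu_2(q^{2^k}-1)=\nu_2(q+1)+k$, to determine $\mathrm{ord}(q)$ in $S_p$ and hence $S_p/H_p$. In case (i) ($p$ odd or $a\geq 2$), $S_p$ is cyclic of order $p^b$, so $S_p/H_p$ is cyclic of order $p^{\min(\nu_p(q-1)-a,\,b)}$, giving the case-(i) formula. In case (ii) ($p=2,\,a=1$), the group $S_p=\langle -1\rangle\times\langle 5\rangle\cong\mathbb{Z}/2\times\mathbb{Z}/2^{b-1}$ is not cyclic, and a Smith-normal-form computation yields $S_p/H_p\cong\mathbb{Z}/2\times\mathbb{Z}/2^{\min(\nu_2(q-1)-2,\,b-1)}$ in subcase (ii.1) ($q\in\langle 5\rangle$) and $S_p/H_p\cong\mathbb{Z}/2^{\min(\nu_2(q+1)-1,\,b-1)}$ in subcase (ii.2) ($q\in -\langle 5\rangle$). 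The main obstacle is subcase (ii.1): the largest element order of the non-cyclic product is $\max\{2,\,2^{\min(\nu_2(q-1)-2,\,b-1)}\}$, which is precisely the source of the anomalous $\max\{\cdot,1\}$ in the theorem statement.
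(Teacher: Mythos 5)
Your proposal follows essentially the same route as the paper's proof: CRT reduction to the primes dividing $\gcd(r,n)$, elimination of the remaining primes via the fixed point $0$ in the $\mathbb{Z}_{n'}$-coordinate, reduction of the existence question to the order of $s$ in the quotient group $\big(1+p^{\nu_p(r)}\mathbb{Z}_{p^{\nu_p(rn)}}\big)/\langle q\rangle$, and finally the maximal element order of that quotient, with your lifting-the-exponent/Smith-form computation at $p=2$ matching the paper's structural lemma for $\mathbb{Z}_{2^a}^*/\langle u\rangle$ and correctly identifying the non-cyclic case as the source of the $\max\{\cdot,1\}$. The only step you leave implicit is the combinatorial equivalence that a Type I $m$-adic splitting under $\mu_s$ exists iff every $\mu_s$-orbit on the set of $q$-cyclotomic cosets has length divisible by $m$ (the paper's Lemma 3.1, quoted from prior work), which your converse direction silently invokes when assembling $s$ from local components.
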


\medskip
Note that
$\nu_2(q-1)=1$ if and only if $q\equiv -1\pmod 4$,  which is equivalent to
$\nu_2(q+1)\ge 2$.

\begin{Theorem}\label{theorem2} There is a unique integer
$M_s=\prod_{p}p^{\nu_p(M_s)}$ such that $\mu_s$ is a Type I $m$-adic
splitting  for $1+r{\Bbb Z}_{rn}$ if and only if $m$ is a divisor of
$M_s$, where $\nu_p(M_s)$ is determined as follows: if $p\nmid r$ or
$p\nmid n$, then $\nu_p(M_s)=0$; otherwise:
\begin{itemize}
\item[\bf(i)]
if $p$ is odd or
$p=2$ and both $\nu_p(q-1)\ge 2$ and $\nu_p(s-1)\ge 2$ hold, then
 $$\nu_p(M_s)=\max\big\{\min\{\nu_p(q-1),\,\nu_p(rn)\}-|\nu_p(s-1)|,\,0\big\};$$
\item[\bf(ii)]
if $p=2$, $\nu_2(q-1)=1$ and  $\nu_2(s-1)\ge 2$, then
 $$\nu_2(M_s)=\max\big\{\min\{\nu_2(q+1)+1,\,\nu_2(rn)\}-|\nu_2(s-1)|,\,0\big\};$$
\item[\bf(iii)]
if $p=2$, $\nu_2(q-1)\ge 2$ and $\nu_2(s-1)=1$, then
 $$\nu_2(M_s)=\max\big\{\min\{\nu_2(q-1),\,\nu_2(rn)\}-|\nu_2(s+1)|,\,1\big\};$$
\item[\bf(iv)]
if $p=2$, $\nu_2(q-1)=1$ and $\nu_2(s-1)=1$, then
$$\hskip-9pt\nu_2(M_s)=\left\{\hskip-3pt\begin{array}{r}\max\big\{\hskip-2pt
 \min\{\nu_2(q+1)+1,\nu_2(rn)\}
    -\min\{|\nu_2(s+1)|,\nu_2(q+1)\},0\big\}, \\
  {\rm if}~~ \nu_2(s+1)\ne\nu_2(q+1);\\[2mm]
  0, \hfill{\rm if}~~ \nu_2(s+1)=\nu_2(q+1).
\end{array}\right.$$
\end{itemize}
\end{Theorem}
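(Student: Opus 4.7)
The plan is to reformulate the existence question as a condition on $\mu_s$-orbit lengths, then apply CRT to reduce to one prime at a time. Specifically, a Type~I $m$-adic splitting of $1+r\mathbb{Z}_{rn}$ with multiplier $\mu_s$ exists if and only if $m$ divides the length of every $\mu_s$-orbit on the set $(1+r\mathbb{Z}_{rn})/\mu_q$ of $q$-cyclotomic cosets, since an orbit of length $\ell$ with $m\mid\ell$ can be sliced into $m$ classes permuted cyclically by $\mu_s$. Hence $M_s$ is the GCD of these orbit lengths. Because $q$ and $s$ both lie in the subgroup $H=1+r\mathbb{Z}_{rn}$ of the unit group, each $\mu_s$-orbit on $H/\langle q\rangle$ is a coset of $\langle q,s\rangle/\langle q\rangle$, so all orbits share the common length $[\langle q,s\rangle:\langle q\rangle]$. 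The CRT decomposition $\mathbb{Z}_{rn}\cong\prod_p\mathbb{Z}_{p^{\nu_p(rn)}}$ is compatible with $\mu_q$ and $\mu_s$, so the problem splits over primes. For $p\nmid r$ or $p\nmid n$ a direct inspection gives $\nu_p(M_s)=0$ (the $p$-component of $1+r\mathbb{Z}_{rn}$ is either $\{1\}$, or it contains the $\mu$-fixed point $0$), so only the primes with $p\mid r$ and $p\mid n$ contribute, and for those the $p$-component of $H$ is $1+p^{\nu_p(r)}\mathbb{Z}_{p^{\nu_p(rn)}}\subseteq(\mathbb{Z}/p^{\nu_p(rn)})^*$.

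\textbf{Odd $p$ and the easy 2-adic subcase.} For odd $p$ the subgroup above is cyclic of order $p^{\nu_p(n)}$, and I would invoke the Lifting-the-Exponent identity $\nu_p(y^k-1)=\nu_p(y-1)+\nu_p(k)$ for $y\equiv 1\pmod p$ to obtain the order of each of $q,s$ as $p^{\max(\nu_p(rn)-\nu_p(y-1),\,0)}$; since subgroups of a cyclic $p$-group are totally ordered by inclusion, $\langle q\rangle$ and $\langle s\rangle$ are nested, and a one-line comparison produces case~(i). The same argument handles the $p=2$ part of (i) whenever $\nu_2(r)\ge 2$, because then $1+4\mathbb{Z}_{2^N}$ is again cyclic.

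\textbf{The 2-adic analysis.} When $\nu_2(r)=1$, the relevant group is the noncyclic $(\mathbb{Z}/2^N)^*=\langle-1\rangle\times\langle 5\rangle$ (for $N\ge 3$), and I would use $\nu_2(5^a-1)=2+\nu_2(a)$ and $\nu_2(5^a+1)=1$ for odd~$a$ to write $q=\pm 5^a$ and $s=\pm 5^b$, converting $\nu_2(q\mp 1)$ and $\nu_2(s\mp 1)$ into $\nu_2(a)$ and $\nu_2(b)$. The four sign patterns correspond to the four subcases of the theorem; (ii) and (iii) are mild variants of (i). The main obstacle will be subcase~(iv), with $q=-5^a$ and $s=-5^b$: here $\langle q,s\rangle=\langle-5^a,5^{a-b}\rangle$, and whether $-1\in\langle q,s\rangle$ hinges on comparing $\nu_2(a)$ with $\nu_2(a-b)$. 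When $\nu_2(a)=\nu_2(b)$, one has $\nu_2(a-b)>\nu_2(a)$, which forces $s\in\langle q\rangle$ and hence $\nu_2(M_s)=0$; otherwise $-1\in\langle q,s\rangle$, and a careful index calculation, with the cutoff at $\nu_2(rn)$ coming from orders that eventually trivialise, produces the stated formula after translating back to $\nu_2(q+1)$ and $\nu_2(s+1)$. Assembling the prime contributions via CRT then yields the full theorem.
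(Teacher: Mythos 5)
Your overall route is the paper's: reduce via the orbit-length criterion (a splitting with multiplier $\mu_s$ exists iff $m$ divides the length of every $\mu_s$-orbit on $(1+r\mathbb{Z}_{rn})/\mu_q$), decompose by CRT, and compute the index $[\langle q,s\rangle_i:\langle q\rangle_i]$ prime by prime, with the $2$-adic cases handled through $\mathbb{Z}_{2^N}^*=\langle-1\rangle\times\langle 5\rangle$. Your lifting-the-exponent computation for odd $p$ and your $\pm 5^a$ parametrization for $p=2$ are equivalent repackagings of the paper's Lemmas~\ref{modulo 2} and~\ref{modolo 2'}, and your handling of subcase (iv) (comparing $\nu_2(a)$ with $\nu_2(b)$, i.e.\ $\nu_2(q+1)$ with $\nu_2(s+1)$, to decide whether $s\in\langle q\rangle$) matches the paper's reduction of case (iv) to case (ii).

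There is, however, one false intermediate claim. The set $1+r\mathbb{Z}_{rn}$ is \emph{not} a subgroup of $\mathbb{Z}_{rn}^*$ (it contains non-units whenever $n$ has a prime factor not dividing $r$), and the $\mu_s$-orbits on $(1+r\mathbb{Z}_{rn})/\mu_q$ do \emph{not} all share the common length $[\langle q,s\rangle:\langle q\rangle]$ computed in $\mathbb{Z}_{rn}^*$. Concretely, take $q=7$, $r=2$, $n=6$, $s=-1$: the $q$-cyclotomic cosets in $1+2\mathbb{Z}_{12}$ are $\{1,7\}$, $\{3,9\}$, $\{5,11\}$, and $\mu_{-1}$ swaps the first and third but fixes $\{3,9\}$, so the orbit lengths are $2$ and $1$ and no duadic splitting exists; your ``common length'' $[\langle 7,11\rangle:\langle 7\rangle]=2$ would predict otherwise. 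What rescues your argument is that your CRT paragraph silently corrects the error: the element whose $\mathbb{Z}_{n'}$-component is $0$ yields an orbit of length exactly $\prod_{p\mid\gcd(r,n)}[\langle q,s\rangle_p:\langle q\rangle_p]$, every orbit length is a multiple of this product, and therefore the GCD of the orbit lengths --- hence $M_s$ --- is that product taken over the primes dividing $\gcd(r,n)$ \emph{only}. This is exactly the content of the paper's Lemma~\ref{outside p} and Theorem~\ref{M_s by quotient}, and it needs to be stated and proved as such rather than deduced from the incorrect ``all orbits have equal length'' assertion; with that repair, the rest of your computation goes through and agrees with the paper.
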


\section{Preparations}
Let ${\cal X}$ be a finite set and let ${\rm Sym}({\cal X})$ be the symmetric group of ${\cal X}$  consisting of all permutations of ${\cal X}$.
If  $\mu\in {\rm Sym}({\cal X})$, i.e., $\mu$ is a permutation of  ${\cal X}$, then
$\langle\mu\rangle=\{\mu^j\,|\,j\in \mathbb{Z}\}$  acts on ${\cal X}$, and
thus  ${\cal X}$ is partitioned into a disjoint union of $\langle\mu\rangle$-orbits (abbreviation: $\mu$-orbits).
The following result appeared previously in \cite[Lemma 3.1]{Liu}.

\begin{lem}\label{adic-orbit}
Let $\mu$ be a permutation of a finite set ${\cal X}$ and let $m$ be a positive integer.
Then the following statements are equivalent:
\begin{itemize}
\item[\bf(i)] There is a partition ${\cal X}={\cal X}_0\cup {\cal X}_1\cup\cdots\cup
{\cal X}_{m-1}$ such that $\mu({\cal X}_i)={\cal X}_{i+1}$ for
$i=0,1,\cdots,m-1$ (the subscripts are taken modulo $m$).
\item[\bf(ii)]The length of every $\mu$-orbit on ${\cal X}$ is divisible by $m$.
\end{itemize}
\end{lem}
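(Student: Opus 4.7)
The plan is to prove both implications by analyzing the action of $\mu$ on individual orbits, since the condition $\mu({\cal X}_i)={\cal X}_{i+1}$ forces a cyclic shift of the parts indexed modulo $m$. So the whole statement should reduce to the observation that a cyclic permutation of $m$ nonempty labels is synchronized with every orbit length only when $m$ divides each such length.

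For (i)$\Rightarrow$(ii), I would fix a $\mu$-orbit ${\cal O}$ of length $\ell$ and any $x\in{\cal O}$. Because the ${\cal X}_j$'s partition ${\cal X}$, there is a unique index $i$ with $x\in{\cal X}_i$; a trivial induction on $k$ using $\mu({\cal X}_j)={\cal X}_{j+1\bmod m}$ gives $\mu^k(x)\in{\cal X}_{(i+k)\bmod m}$. Taking $k=\ell$ yields $x=\mu^\ell(x)\in{\cal X}_{(i+\ell)\bmod m}\cap{\cal X}_i$, and disjointness of the partition forces $\ell\equiv 0\pmod m$.

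For (ii)$\Rightarrow$(i), I would build the partition orbit by orbit. In each orbit ${\cal O}$ choose a base point $x_{\cal O}$; by hypothesis $|{\cal O}|=mk_{\cal O}$ for some positive integer $k_{\cal O}$, and the elements $\mu^j(x_{\cal O})$ with $0\le j<mk_{\cal O}$ enumerate ${\cal O}$ without repetition. Set ${\cal X}_i=\bigcup_{\cal O}\{\mu^j(x_{\cal O})\mid 0\le j<mk_{\cal O},\ j\equiv i\pmod m\}$ for $i=0,\ldots,m-1$. Then $\mu$ carries each listed element in ${\cal X}_i$ to its successor in ${\cal X}_{(i+1)\bmod m}$, and the wrap-around $\mu^{mk_{\cal O}-1}(x_{\cal O})\mapsto x_{\cal O}$ is consistent because $mk_{\cal O}-1\equiv m-1\pmod m$ and so $x_{\cal O}\in{\cal X}_0$ while $\mu^{mk_{\cal O}-1}(x_{\cal O})\in{\cal X}_{m-1}$.

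The only real bookkeeping is verifying well-definedness in the second direction: disjointness of the ${\cal X}_i$'s follows because distinct orbits are disjoint and, within a single orbit, two indices $j,j'\in\{0,\ldots,mk_{\cal O}-1\}$ with $j\not\equiv j'\pmod m$ produce distinct elements; the covering ${\cal X}=\bigcup_i{\cal X}_i$ is immediate since every element lies in some orbit. I do not anticipate any real obstacle---the argument is essentially a cycle-labelling exercise once the right base points are fixed in each orbit.
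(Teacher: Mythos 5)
Your proof is correct; the paper itself gives no proof of this lemma, citing it as Lemma~3.1 of \cite{Liu}, and your argument (tracking the part index of $\mu^k(x)$ modulo $m$ for one direction, and distributing each orbit by residue of the exponent for the other) is the standard one. The only point worth making explicit is that your construction directly yields the inclusion $\mu({\cal X}_i)\subseteq{\cal X}_{i+1}$, and the equality $\mu({\cal X}_i)={\cal X}_{i+1}$ then follows because $\mu$ is a bijection and the ${\cal X}_i$ partition ${\cal X}$.
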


Let a finite group $G$ act on a finite set ${\cal X}$. As is well
known, for $x\in{\cal X}$, the length of the $G$-orbit containing
$x$ is equal to the index $|G:G_x|$, where $G_x$ is the stabilizer
of $x$ in $G$. The action of $G$ on ${\cal X}$ is said to be {\em
free} if, for any $x\in{\cal X}$, the stabilizer of $x$ is $G_x= \{
1 \}$. An element  $\mu\in G$ is said to be free on ${\cal X}$ if
the subgroup $\langle \mu\rangle$ generated by $\mu$ acts on ${\cal
X}$ freely; in that case the length of any $\mu$-orbit on ${\cal X}$
is equal to the order of $\mu$   (cf. \cite[Ch.1]{AB}).

The proofs of the next two elementary facts are straightforward,
so we omit them here.

\begin{lem}\label{lcm orbit}
Let $G$, $H$ be finite groups,
and let ${\cal X}$, ${\cal Y}$ be a finite $G$-set and a finite $H$-set, respectively.
Then ${\cal X}\times {\cal Y}$ is a finite $(G\times H)$-set with the
natural action of $G\times H$, and the following statements hold:

\begin{itemize}
\item[\bf(i)]
For $g\in G$ and $h\in H$, the order of
$(g,h)\in G\times H$ is equal to the least common multiple of
the order of $g$ in $G$ and the order of $h$ in $H$, i.e.,
${\rm lcm}\big({\rm ord}_G(g),{\rm ord}_H(h)\big)$.

\item[\bf(ii)]
For $x\in{\cal X}$ and $y\in{\cal Y}$, the length
of the $(g,h)$-orbit on ${\cal X}\times {\cal Y}$ containing $(x,y)$
is equal to the least common multiple of
the length of the $g$-orbit on ${\cal X}$ containing $x$ and
the length of the $h$-orbit on ${\cal Y}$ containing $y$.
\end{itemize}
\end{lem}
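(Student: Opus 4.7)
The plan is to prove both parts by direct calculation, exploiting the componentwise nature of the action of $G\times H$ on ${\cal X}\times{\cal Y}$; no deep group-theoretic machinery is needed.

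For \textbf{(i)}, I would write $(g,h)^k=(g^k,h^k)$ for every integer $k$, so that $(g,h)^k$ is the identity of $G\times H$ if and only if $g^k=e_G$ and $h^k=e_H$ simultaneously. By the standard characterization of the order of an element in a group, $g^k=e_G$ iff ${\rm ord}_G(g)\mid k$, and similarly $h^k=e_H$ iff ${\rm ord}_H(h)\mid k$. Hence the order of $(g,h)$ is the least positive integer $k$ divisible by both ${\rm ord}_G(g)$ and ${\rm ord}_H(h)$, which is their least common multiple.

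For \textbf{(ii)}, I would use the same componentwise identity to observe that the $\langle(g,h)\rangle$-orbit of $(x,y)$ is $\{(g^kx,h^ky)\mid k\in\mathbb{Z}\}$, whose cardinality equals the least positive $k$ such that $(g^kx,h^ky)=(x,y)$, i.e., $g^kx=x$ and $h^ky=y$. If $\alpha$ denotes the length of the $g$-orbit of $x$ in ${\cal X}$, then the $\alpha$ elements $x,gx,\ldots,g^{\alpha-1}x$ are distinct while $g^\alpha x=x$, so $g^kx=x$ iff $\alpha\mid k$; likewise $h^ky=y$ iff the length $\beta$ of the $h$-orbit of $y$ divides $k$. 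Thus the orbit length of $(x,y)$ under $\langle(g,h)\rangle$ is ${\rm lcm}(\alpha,\beta)$, as required.

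There is essentially no obstacle here; the only fact used beyond bookkeeping is the equivalence ``$g^kx=x$ iff the $g$-orbit length of $x$ divides $k$,'' which follows from the minimality definition of orbit length. This is exactly why the authors merely state the lemma and omit the proof, and why the same idea will later combine with Lemma~\ref{adic-orbit} to translate questions about $m$-adic splittings into divisibility questions about orbit lengths, which is where the real work of Theorems \ref{theorem1} and \ref{theorem2} begins.
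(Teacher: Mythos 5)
Your proof is correct and is precisely the standard componentwise argument the authors had in mind when they declared the lemma ``straightforward'' and omitted the proof; the key step in (ii), that $g^kx=x$ iff the $\langle g\rangle$-orbit length of $x$ divides $k$, is justified correctly via the stabilizer subgroup of $\mathbb{Z}$ (equivalently, minimality of the orbit length). Nothing further is needed.
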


\begin{lem}\label{quotient-orbit}
Let $G$ act
on a finite set ${\cal X}$ freely, and let $N$ be a normal subgroup of $G$.
Let ${\cal X}/N$ be the set of $N$-orbits on ${\cal X}$.
Then the quotient $G/N$ acts on ${\cal X}/N$ freely;
in particular, the length of any $G/N$-orbit on ${\cal X}/N$ is equal
to the index $|G:N|$.
\end{lem}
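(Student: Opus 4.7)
The plan is to verify in three short steps that (a) the formula $(gN)\cdot(Nx):=N(gx)$ really defines an action of $G/N$ on ${\cal X}/N$, (b) this action is free, and (c) freeness plus finiteness gives the orbit-length statement.

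First I would check well-definedness, which is the only place the normality of $N$ enters. Replacing the coset representative $g$ by $gn$ with $n\in N$ changes $gx$ to $gnx=(gng^{-1})(gx)$, and $gng^{-1}\in N$ by normality, so the $N$-orbit $N(gx)$ is unchanged; similarly, replacing $x$ by $n'x$ with $n'\in N$ changes $gx$ to $(gn'g^{-1})(gx)$, again in the same $N$-orbit. So the formula descends to a map $G/N\times{\cal X}/N\to{\cal X}/N$, and the two axioms of a group action are inherited directly from the $G$-action on ${\cal X}$.

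Next I would prove freeness. Suppose $gN$ stabilises $Nx$, i.e.\ $N(gx)=Nx$. Then $gx=nx$ for some $n\in N$, whence $(n^{-1}g)\cdot x=x$. Since $G$ acts freely on ${\cal X}$ by hypothesis, the stabiliser of $x$ in $G$ is trivial, forcing $n^{-1}g=1$, i.e.\ $g\in N$; thus $gN$ is the identity coset of $G/N$. This shows the stabiliser of every $Nx\in{\cal X}/N$ under $G/N$ is trivial, so the $G/N$-action is free.

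The conclusion on orbit lengths is then automatic: for any free action of a finite group $H$ on a finite set, the orbit-stabiliser theorem forces every orbit to have length $|H|$, so every $G/N$-orbit on ${\cal X}/N$ has length $|G/N|=|G:N|$. I do not anticipate a real obstacle; the only thing to notice is that normality is used exactly to make the quotient action well-defined, and freeness of the $G$-action is used exactly to collapse the stabiliser computation to the trivial subgroup.
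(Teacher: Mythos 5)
Your proof is correct and complete: well-definedness of the induced action via normality, triviality of stabilisers via freeness of the $G$-action, and the orbit length via orbit--stabiliser. The paper omits the proof of this lemma entirely (it is declared ``straightforward''), and your argument is exactly the standard one intended.
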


\begin{Remark}\label{modulo p}\rm
Let $t$ be a positive integer and $u\in{\Bbb Z}_t^*$.
\begin{itemize}

\item[\bf(i)]
The action of $\mu_u$ on ${\Bbb Z}_t$ is not free,
e.g., $0$ is always fixed by~$\mu_u$.
However, ${\Bbb Z}_t^*$ is $\mu_u$-invariant and the action of $\mu_u$
on ${\Bbb Z}_t^*$ is always free.

\item[\bf(ii)]
If $t=p^a$ is an odd prime power, then
${\Bbb Z}_{p^a}^*$ is a cyclic group of order $p^{a-1}(p-1)$;
the subset $1+p^b{\Bbb Z}_{p^a}$ with $b\ge 1$ of ${\Bbb Z}_{p^a}^*$
is a cyclic subgroup of order $p^{\max\{a-b,\,0\}}$,
and $1+p^bd$, with $d$ coprime to $p$,
is a generator of the cyclic subgroup $1+p^b{\Bbb Z}_{p^a}$.

\item[\bf(iii)]
If $t=2^a$ with $a\ge 2$, then
${\Bbb Z}_{2^a}^*=\begin{cases}\langle-1\rangle, & a=2;\\
  \langle-1\rangle\times\langle 5\rangle, & a>2,\end{cases}$
where the order of $\langle 5 \rangle$ is  $|\langle 5\rangle|=2^{a-2}$.
For the subgroup $1+2^b{\Bbb Z}_{2^a}$ with $b\ge 1$ of ${\Bbb Z}_{2^a}^*$,
there are two subcases:
 \begin{itemize}

  \item[\bf(iii.1)]
   if $b\ge 2$, then
   $1+2^b{\Bbb Z}_{2^a}\subseteq\langle 5\rangle$
   with order $|1+2^b{\Bbb Z}_{2^a}|=2^{\max\{a-b,\,0\}}$,
   and $1+2^bd$, with $d$ coprime to $2$,
is a generator of the cyclic subgroup $1+2^b{\Bbb Z}_{2^a}$.

  \item[\bf(iii.2)] if $b=1$, then $1+2{\Bbb Z}_{2^a}={\Bbb Z}_{2^a}^*$.

 \end{itemize}
\end{itemize}
\end{Remark}

The next two lemmas play important roles in the proofs of our main results.

\begin{lem}\label{modulo 2}
Let $u\ne -1$ be an odd integer. In the multiplicative group
${\Bbb Z}_{2^a}^*$ ($a\ge 2$), we have:
\begin{itemize}

\item[\bf(i)]
If $\nu_2(u-1)\ge 2$, then
$\langle u\rangle\subseteq\langle 5\rangle$,
${\rm ord}(u)=2^{\max\{a-\nu_2(u-1),\,0\}}$ and the quotient group
${\Bbb Z}_{2^a}^*/\langle u\rangle
=\langle\overline{-1}\rangle\times\langle\bar 5\rangle$, where
$\overline{-1}$ and $\bar 5$ denote the images of $-1$ and $5$
in the quotient group,  respectively.
In particular,  $|\langle\bar 5\rangle|=2^{\min\{\nu_2(u-1)-2,\,a-2\}}$ and
$|\langle\overline{-1}\rangle|=2$.

\item[\bf(ii)]
If $\nu_2(u-1)=1$, then $\langle u\rangle\cap\langle 5\rangle=\langle u^2\rangle$,
${\rm ord}(u^2)=2^{\max\{a-\nu_2(u+1)-1,\,0\}}$ and the quotient group
${\Bbb Z}_{2^a}^*/\langle u\rangle=\langle\bar 5\rangle$
is a cyclic group of order
$2^{\min\{\nu_2(u+1)-1,\,a-2\}}$.

\end{itemize}
\end{lem}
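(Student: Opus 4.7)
The plan is to deduce both statements from the explicit description of $\mathbb{Z}_{2^a}^*$ given in Remark \ref{modulo p}(iii), namely $\mathbb{Z}_{2^a}^*=\langle -1\rangle\times\langle 5\rangle$ (for $a\ge 3$, with $\langle 5\rangle$ collapsing when $a=2$), together with the fact that $1+2^b\mathbb{Z}_{2^a}$ is cyclic of the stated order and is generated by any element of the form $1+2^bd$ with $d$ odd. The $a=2$ degeneracy should be absorbed uniformly by the $\max\{\cdot,0\}$ truncations appearing in the target formulas, rather than split off as a separate case.

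For part (i), I would set $b=\nu_2(u-1)\ge 2$ and write $u=1+2^bd$ with $d$ odd, so that Remark \ref{modulo p}(iii.1) identifies $\langle u\rangle$ with the cyclic subgroup $1+2^b\mathbb{Z}_{2^a}\subseteq 1+4\mathbb{Z}_{2^a}\subseteq\langle 5\rangle$; this simultaneously gives the inclusion $\langle u\rangle\subseteq\langle 5\rangle$ and the stated order of $u$. Since $\langle u\rangle$ lies entirely in the second direct factor, passing to quotients preserves the direct-product decomposition, yielding $\mathbb{Z}_{2^a}^*/\langle u\rangle=\langle\overline{-1}\rangle\times\langle\bar 5\rangle$, with the order of $\langle\bar 5\rangle$ obtained from the routine count $|\langle 5\rangle|/|\langle u\rangle|=2^{a-2-\max\{a-b,0\}}=2^{\min\{b-2,a-2\}}$.

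For part (ii), $\nu_2(u-1)=1$ forces $u\equiv -1\pmod 4$, so $-u\equiv 1\pmod 4$ belongs to $1+4\mathbb{Z}_{2^a}\subseteq\langle 5\rangle$; writing $u=-5^j$ for some $j$ is the key structural move. Even powers $u^{2k}=5^{2kj}$ lie in $\langle 5\rangle$ while odd powers $u^{2k+1}=-5^{(2k+1)j}$ do not, which gives $\langle u\rangle\cap\langle 5\rangle=\langle u^2\rangle$; and the identity $-1=u\cdot 5^{-j}$ shows $\langle u\rangle\cdot\langle 5\rangle=\mathbb{Z}_{2^a}^*$. The second isomorphism theorem then identifies $\mathbb{Z}_{2^a}^*/\langle u\rangle$ with $\langle 5\rangle/\langle u^2\rangle$, which is cyclic with generator $\bar 5$. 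The order of $u^2$ would be read off from
\[
\nu_2(u^2-1)=\nu_2(u-1)+\nu_2(u+1)=1+\nu_2(u+1)
\]
combined again with Remark \ref{modulo p}(iii.1).

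The step I expect to need the most care is verifying that $u^2$ truly \emph{generates} $1+2^{1+\nu_2(u+1)}\mathbb{Z}_{2^a}$ rather than merely belongs to it, since Remark \ref{modulo p}(iii.1) requires an element of the form $1+2^bd$ with $d$ odd; I would handle this by extracting the odd part of $u^2-1=(u-1)(u+1)$ explicitly from the factorization of $u-1$ and $u+1$. The hypothesis $u\ne -1$ enters here precisely to ensure that $\nu_2(u+1)$ is a finite integer, so that the claimed exponents make sense.
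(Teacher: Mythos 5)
Your proposal is correct and follows essentially the same route as the paper: both arguments rest on the decomposition $\mathbb{Z}_{2^a}^*=\langle -1\rangle\times\langle 5\rangle$ with $\langle 5\rangle=1+4\mathbb{Z}_{2^a}$, identify $\langle u\rangle$ (resp.\ $\langle u^2\rangle$, after observing $-u\in\langle 5\rangle$) inside $\langle 5\rangle$ via the order formula for $1+2^b\mathbb{Z}_{2^a}$, and finish with the second isomorphism theorem. The only cosmetic difference is that you compute $\nu_2(u^2-1)$ from the factorization $(u-1)(u+1)$ rather than by expanding $u^2$ directly as the paper does; the step you flag as delicate is in fact automatic from the definition of the $2$-adic valuation.
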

\begin{proof}
Let
$u=1+2^bd$,  where $b=\nu_2(u-1)$ and $d$ is odd.

(i).~ If $b\ge a$, then $u=1\in\langle 5\rangle$.
Otherwise, $2\le b<a$,  so $(1+2^bd)^{2^{a-b}}=1$ but
$(1+2^bd)^{2^{a-b-1}}\ne 1$, which gives ${\rm ord}(u)=2^{a-b}$.
In other words, ${\rm ord}(u)=2^{\max\{a-b,\,0\}}$
\big(this is just an argument for Remark~\ref{modulo p} (iii.1)\big).
Specifically, $\langle 5\rangle=1+2^2{\Bbb Z}_{2^a}$.
Therefore, $\langle u\rangle\subseteq\langle 5\rangle$,
${\Bbb Z}_{2^a}^*/\langle u\rangle
=\langle\bar 5\rangle\times\langle\overline{-1}\rangle$ and
$$|\langle\bar 5\rangle|=2^{a-2}/{\rm ord}(u)=2^{\min\{b-2,\,a-2\}}.$$

(ii).~ In this case  $u+1=2(d+1)$, so $\nu_2(u+1)=1+\nu_2(d+1)\ge 2$.
Writing $d=2^{\nu_2(u+1)-1}(-u')-1$ with $u'$ being odd, we get
\begin{equation}\label{2-valuation=1}
u=1+2d=(-1)(1+2^{\nu_2(u+1)}u')
 \in\langle -1\rangle\times\langle 5\rangle,\quad
  \nu_2(u+1)\ge 2,~~ 2\nmid u'.
\end{equation}
Note that  $u\notin\langle 5\rangle$, but
$u^2=1+2^{\nu_2(u+1)+1}(u'+2^{\nu_2(u+1)-1}u'^2)\in\langle 5\rangle$ and
$$
{\rm ord}(u^2)=\begin{cases}2^{a-\nu_2(u+1)-1}, & \nu_2(u+1)+1\le a ; \\
  1, & \nu_2(u+1)+1> a. \end{cases}
$$
In other words, ${\rm ord}(u^2)=2^{\max\{a-\nu_2(u+1)-1,\,0\}}$.
Then ${\Bbb Z}_{2^a}^*=\langle 5\rangle\cdot\langle u\rangle$
and $\langle u\rangle\cap\langle 5\rangle=\langle u^2\rangle$, hence
$$
{\Bbb Z}_{2^a}^*/\langle u\rangle\cong
\langle 5\rangle/ \left( \langle u\rangle\cap\langle 5\rangle \right)
=\langle 5\rangle/\langle u^2\rangle
$$
is cyclic group. Recalling that $|\langle 5\rangle|=2^{a-2}$, we then have
$$
|{\Bbb Z}_{2^a}^*/\langle u\rangle|=|\langle 5\rangle/\langle u^2\rangle|
=2^{\min\{\nu_2(u+1)-1,\,a-2\}}.
$$
We are done.
\end{proof}

\begin{lem}\label{modolo 2'}
Let $h,u$ be odd integers with $u\ne -1$.
We denote by $\bar h$ the image of $h$ in the quotient group
${\Bbb Z}_{2^a}^*/\langle u\rangle$, where $a\ge 2$.
Let ${\rm ord}(\bar h)=2^{v}$.
With the convention that $\nu_2(h+1)=-\infty$ and
$|\nu_2(h+1)|=\infty$ when $h=-1$, we have:
\begin{itemize}
\item[\bf(i)]
If both $\nu_2(u-1)\ge 2$ and $\nu_2(h-1)\ge 2$, then
 $$ v=
  \max\big\{\min\{\nu_2(u-1),\,a\}-\nu_2(h-1),\,0\big\}.$$
\item[\bf(ii)]
If $\nu_2(u-1)=1$ and $\nu_2(h-1)\ge 2$, then
 $$v=
   \max\big\{\min\{\nu_2(u+1)+1,\,a\}-\nu_2(h-1),\,0\big\}.$$
\item[\bf(iii)] If $\nu_2(u-1)\ge 2$ and $\nu_2(h-1)=1$, then
 $$v=
 \max\big\{\min\{\nu_2(u-1),\,a\}-|\nu_2(h+1)|,\,1\big\}.$$
\item[\bf(iv)]
If both $\nu_2(u-1)=1$ and $\nu_2(h-1)=1$, then
$$v=\left\{\begin{array}{r}\max\big\{
 \min\{\nu_2(u+1)+1,\,a\}-\min\{|\nu_2(h+1)|,\,\nu_2(u+1)\},\,0\big\},\quad \\
  {\rm if}~~ \nu_2(h+1)\ne\nu_2(u+1);\\[2mm]
  0, \hfill{\rm if}~~ \nu_2(h+1)=\nu_2(u+1).
\end{array}\right.$$
\end{itemize}
\end{lem}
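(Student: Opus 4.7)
The plan is to apply Lemma \ref{modulo 2} in each of the four cases to pin down the structure of ${\Bbb Z}_{2^a}^*/\langle u\rangle$, and then read off the order $v$ of $\bar h$ using the cyclic structure of $\langle 5\rangle$. The single reusable computation I need is a direct consequence of Remark \ref{modulo p}(iii.1): for any odd $d$ and $b\ge 2$, the element $1+2^b d$ generates $1+2^b{\Bbb Z}_{2^a}$ inside $\langle 5\rangle$, so its image in the cyclic quotient $\langle 5\rangle/(1+2^c{\Bbb Z}_{2^a})$ has order $2^{\max\{\min\{c,a\}-b,\,0\}}$.

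Cases (i) and (ii) are direct applications. In (i), write $h=1+2^{\nu_2(h-1)}d$ with $d$ odd; then $h\in\langle 5\rangle$ and $\langle u\rangle=1+2^{\nu_2(u-1)}{\Bbb Z}_{2^a}$ by Lemma \ref{modulo 2}(i), so plugging $b=\nu_2(h-1)$ and $c=\nu_2(u-1)$ into the displayed formula gives (i). In (ii), the isomorphism ${\Bbb Z}_{2^a}^*/\langle u\rangle\cong\langle 5\rangle/\langle u^2\rangle$ from Lemma \ref{modulo 2}(ii), together with the identity $\langle u^2\rangle=1+2^{\nu_2(u+1)+1}{\Bbb Z}_{2^a}$ already visible in its proof, reduces (ii) to the same displayed formula with $c=\nu_2(u+1)+1$.

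For case (iii), I normalise $h=(-1)(1+2^{\nu_2(h+1)}h'')$ with $h''$ odd, using the convention $|\nu_2(h+1)|=\infty$ to cover $h=-1$. Inside $\langle\overline{-1}\rangle\times\langle\bar 5\rangle$ (Lemma \ref{modulo 2}(i)), the order of $\bar h$ is the smallest even $k$ that also kills its $\langle\bar 5\rangle$-component, i.e.\ $\max\{2,\,{\rm ord}_{\langle\bar 5\rangle}(\overline{1+2^{\nu_2(h+1)}h''})\}$; substituting the (i)-formula with $|\nu_2(h+1)|$ in place of $\nu_2(h-1)$ yields (iii).

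The main obstacle is case (iv). Here both $u$ and $h$ carry a $-1$ factor, and the cyclic quotient $\langle 5\rangle/\langle u^2\rangle$ represents $\bar h$ by $hu\in\langle 5\rangle$. Writing $u=(-1)(1+2^{\nu_2(u+1)}u')$ and $h=(-1)(1+2^{\nu_2(h+1)}h'')$ and expanding,
\[
hu-1=2^{\nu_2(u+1)}u'+2^{\nu_2(h+1)}h''+2^{\nu_2(u+1)+\nu_2(h+1)}u'h'',
\]
so $\nu_2(hu-1)=\min\{\nu_2(u+1),\nu_2(h+1)\}$ whenever $\nu_2(h+1)\ne\nu_2(u+1)$, while the two leading terms merge into an even multiple of $2^{\nu_2(u+1)}$ when they coincide, forcing $\nu_2(hu-1)\ge\nu_2(u+1)+1$. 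The displayed cyclic-quotient formula with $b=\min\{\nu_2(u+1),\nu_2(h+1)\}$ and $c=\nu_2(u+1)+1$ then reproduces the first branch of (iv); in the equal-exponent case $hu\in\langle u^2\rangle$ gives $\bar h=1$ and $v=0$. The degenerate $h=-1$ lies in the first branch via $|\nu_2(h+1)|=\infty>\nu_2(u+1)$, and the cap $\min\{\nu_2(u+1)+1,\,a\}$ is harmless because $u\ne -1$ forces $\nu_2(u+1)\le a-1$.
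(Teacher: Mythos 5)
Your proposal is correct and follows essentially the same route as the paper's proof: reduce everything to order computations in the cyclic group $\langle 5\rangle$, handle (ii) via the isomorphism ${\Bbb Z}_{2^a}^*/\langle u\rangle\cong\langle 5\rangle/\langle u^2\rangle$, handle (iii) via the decomposition $\langle\overline{-1}\rangle\times\langle\bar 5\rangle$ and a maximum of orders, and handle (iv) by replacing $h$ with $hu\in\langle 5\rangle$, computing $\nu_2(hu-1)$, and reducing to (ii). Your packaging of the common step as one reusable formula for the order of the image of $1+2^bd$ in $\langle 5\rangle/(1+2^c{\Bbb Z}_{2^a})$ is a minor reorganization, not a different argument.
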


\begin{proof} Let $\langle h,u\rangle$ be the subgroup of
${\Bbb Z}_{2^a}^*$ generated by $h$ and $u$. Then
$2^v=\big|\langle h,u\rangle/\langle u\rangle\big|$.

(i).~ By Lemma~\ref{modulo 2}(i), both $u$ and $h$ are located
in $\langle 5\rangle$.
Hence, $2^v={\rm ord}(h)/{\rm ord}(u)$ if ${\rm ord}(h)>{\rm ord}(u)$,
and $2^v=1$ otherwise. However, ${\rm ord}(h)=2^{\max\{a-\nu_2(h-1),\,0\}}$
and ${\rm ord}(u)=2^{\max\{a-\nu_2(u-1),\,0\}}$
by Lemma~\ref{modulo 2}(i) again.  We get
$${\rm ord}(h)>{\rm ord}(u)~~ \iff~~
       \nu_2(h-1)<\min\{\nu_2(u-1),a\}.$$
Thus $2^v=2^{\max\{\min\{\nu_2(u-1),\,a\}-\nu_2(h-1),\,0\}}$.

(ii).~ In this case, $\langle h\rangle\subseteq \langle 5\rangle$
but $\langle u\rangle\cap\langle 5\rangle=\langle u^2\rangle$
(see Lemma~\ref{modulo 2}(ii)).  Then
$\langle h\rangle\cap\langle u\rangle=
\langle h\rangle\cap\langle 5\rangle\cap\langle u\rangle=
\langle h\rangle\cap\langle u^2\rangle$,
and hence
$$
2^v=\big|\langle h,u\rangle / \langle u\rangle\big|
=\big|\langle h\rangle / \langle h\rangle\cap\langle u^2\rangle\big|,
$$
which leads to computations in the cyclic group $\langle 5\rangle$
similar to the ones in (i).
By Lemma~\ref{modulo 2}(ii),
${\rm ord}(u^2)=2^{\max\{a-\nu_2(u+1)-1,\,0\}}$, so
$$2^v=2^{\max\{\min\{\nu_2(u+1)+1,\,a\}-\nu_2(h-1),\,0\}}.$$

(iii).~ We can write $h=(-1)(1+2^{\nu_2(h+1)}h')$
with $|\nu_2(h+1)|\ge 2$
(the case $\nu_2(h+1)=-\infty$, i.e., $h=-1$, is allowed)
and $h'$ being odd  (see Eqn~(\ref{2-valuation=1}));
set $h''=1+2^{\nu_2(h+1)}h'$.
By Lemma~\ref{modulo 2}(i), it follows that
${\Bbb Z}_{2^a}^*/\langle u\rangle=
\langle\overline{-1}\rangle\times\langle\bar 5\rangle$,
and $\bar h=\overline{-1}\cdot\overline{h''}$ with
$\overline{h''}\in\langle\bar 5\rangle$. Then
$$
 2^v={\rm ord}(\bar h)=
 \max\{{\rm ord}(\overline{-1}),\,{\rm ord}(\overline{h''})\} .
$$
We know that ${\rm ord}(h'')=2^{\max\{a-|\nu_2(h+1)|,\,0\}}$
(see Lemma~\ref{modulo 2}(i), but $h=-1$ is allowed here).
With the same argument as in (i), we have
$$
{\rm ord}(\overline{h''})=2^{\max\{\min\{\nu_2(u-1),\,a\}-|\nu_2(h+1)|,\,0\}}.
$$
Recalling that ${\rm ord}(\overline{-1})=2$, we get that
$
2^v=2^{\max\{\min\{\nu_2(u-1),\,a\}-|\nu_2(h+1)|,\,1\}}.
$

(iv).~ Similar to the above, we can write
\begin{eqnarray*}
 & h=(-1)h'',~~ h''=1+2^{\nu_2(h+1)}h',~~ \nu_2(h+1)\ge 2,~~ 2\nmid h';\\
 & u=(-1)u'',~~ u''=1+2^{\nu_2(u+1)}u',~~ \nu_2(u+1)\ge 2,~~ 2\nmid u'.
\end{eqnarray*}
It is clear that $\langle h,u\rangle=\langle hu,u\rangle$, so
$$
2^v=|\langle h,u\rangle/\langle u\rangle|=
  |\langle hu,u\rangle/\langle u\rangle|.
$$
Since $u\notin\langle 5\rangle$, but
$u^2=1+2^{\nu_2(u+1)+1}(u'+2^{\nu_2(u+1)-1}u'^2)\in\langle 5\rangle$ and
$$
 hu=h''u''=1+2^{\nu_2(h+1)}h'+2^{\nu_2(u+1)}u'+2^{\nu_2(h+1)+\nu(u+1)}h'u'
 ~\in~\langle 5\rangle;
$$
this present case can be reduced to the case (ii) above (by replacing $h$ in (ii) with $hu$).
If $\nu_2(h+1)=\nu_2(u+1)$, then $\nu_2(hu-1)\ge \nu_2(u+1)+1$
(see the above formulation of $hu$), hence
$2^v=1$ by the conclusion in (ii).
Otherwise
$$\nu_2(hu-1)=\begin{cases}
 \min\{\nu_2(h+1),\nu_2(u+1)\}, & h\ne -1;\\
 \nu_2(u+1), & h=-1, \end{cases}$$
i.e., $\nu_2(hu-1)=\min\{|\nu_2(h+1)|,\nu_2(u+1)\}$.
By the conclusion in (ii) again,
$$2^v= 2^{\max\left\{
  \min\{\nu_2(u+1)+1,\,a\}-\min\{|\nu_2(h+1)|,\,\nu_2(u+1)\},\,0\right\}}.
$$
If $\nu_2(h+1)>\nu_2(u+1)$ or $h=-1$, then
$\min\{|\nu_2(h+1)|,\,\nu_2(u+1)\}=\nu_2(u+1)$,
and hence $v$ can be simplified to:
$$v={\max\{\min\{\nu_2(u+1)+1,a\}-\nu_2(u+1),\,0\}}
  =\begin{cases}1, & \nu_2(u+1)<a;\\ 0, & \nu_2(u+1)\ge a.  \end{cases}$$
The proof is completed.
\end{proof}

\section{Proofs of the main results}

We keep the notations of Section 2.
Consider the surjective homomorphism
\begin{equation}\label{rho}
{\Bbb Z}_{rn} \longrightarrow {\Bbb Z}_r,\quad
x~({\rm mod}~rn) \longmapsto x~({\rm mod}~r).
\end{equation}
Then $1+r{\Bbb Z}_{rn}$ is just the inverse image of $1\in{\Bbb Z}_r$.

Assume that
$p_1,\cdots,p_k$, $p'_1,\cdots,p'_{k'}$, $p''_1,\cdots,p''_{k''}$
are distinct primes such that
$$
n=p_1^{\alpha_1}\cdots p_k^{\alpha_k}
  {p'_1}^{\alpha'_1}\cdots{p'_{k'}}^{\alpha'_{k'}},\quad
  \mbox{with $\alpha_1, \cdots, \alpha_k, \alpha'_1, \cdots, \alpha'_{k'}$
  all positive;}
$$
$$
r=p_1^{\beta_1}\cdots p_k^{\beta_k}
  {p''_1}^{\beta''_1}\cdots{p''_{k''}}^{\beta''_{k''}}, \quad
  \mbox{with $\beta_1, \cdots, \beta_k, \beta''_1, \cdots, \beta''_{k''}$
  all positive,}
$$
i.e., $\alpha_i=\nu_{p_i}(n)$, $\beta_i=\nu_{p_i}(r)$, etc. Then
$$
rn=p_1^{\alpha_1+\beta_1}\cdots p_k^{\alpha_k+\beta_k}
   {p'_1}^{\alpha'_1}\cdots{p'_{k'}}^{\alpha'_{k'}}
   {p''_1}^{\beta''_1}\cdots{p''_{k''}}^{\beta''_{k''}}.
$$
Set $n'={p'_1}^{\alpha'_1}\cdots{p'_{k'}}^{\alpha'_{k'}}$
and $r''={p''_1}^{\beta''_1}\cdots{p''_{k''}}^{\beta''_{k''}}$.
Applying the Chinese Remainder Theorem, we rewrite ${\Bbb Z}_{rn}$ as follows:
$$
{\Bbb Z}_{rn}\mathop{=}^{\rm CRT}
{\Bbb Z}_{p_1^{\alpha_1+\beta_1}}\times\cdots\times
 {\Bbb Z}_{p_k^{\alpha_k+\beta_k}} \times{\Bbb Z}_{n'}\times{\Bbb Z}_{r''}.
$$
The surjective homomorphism (\ref{rho}) may be rewritten as
\begin{equation}\label{rho CRT} \rho:\quad
{\Bbb Z}_{p_1^{\alpha_1+\beta_1}}\times\cdots\times
 {\Bbb Z}_{p_k^{\alpha_k+\beta_k}} \times{\Bbb Z}_{n'}\times{\Bbb Z}_{r''}
 ~{\longrightarrow}~
 {\Bbb Z}_{p_1^{\beta_1}}\times\cdots\times
 {\Bbb Z}_{p_k^{\beta_k}}\times{\Bbb Z}_{r''} ,
\end{equation}
with kernel
$$
{\rm Ker}(\rho)=p_1^{\beta_1}{\Bbb Z}_{p_1^{\alpha_1+\beta_1}}
  \times\cdots\times
 p_k^{\beta_k}{\Bbb Z}_{p_k^{\alpha_k+\beta_k}}
 \times{\Bbb Z}_{n'}\times\{0\},
$$
where $\{0\}$ is the zero ideal of ${\Bbb Z}_{r''}$.
Thus, $1+r{\Bbb Z}_{rn}=1+{\rm Ker}(\rho)$ can be  rewritten as
\begin{equation}\label{Omega CRT}
1+r{\Bbb Z}_{rn}\mathop{=}^{\rm CRT}
 \big(1+p_1^{\beta_1}{\Bbb Z}_{p_1^{\alpha_1+\beta_1}}\big)
  \times\cdots\times
  \big(1+p_k^{\beta_k}{\Bbb Z}_{p_k^{\alpha_k+\beta_k}}\big)
 \times{\Bbb Z}_{n'}\times\{1\}.
\end{equation}
Therefore
\begin{equation}\label{Z Omega CRT}
{\Bbb Z}_{rn}^*\cap(1+r{\Bbb Z}_{rn})\mathop{=}^{\rm CRT}
 \big(1+p_1^{\beta_1}{\Bbb Z}_{p_1^{\alpha_1+\beta_1}}\big)
  \times\cdots\times
  \big(1+p_k^{\beta_k}{\Bbb Z}_{p_k^{\alpha_k+\beta_k}}\big)
 \times{\Bbb Z}_{n'}^*\times\{1\}.
\end{equation}
Thus, any $x\in(1+r{\Bbb Z}_{rn})$ can be represented as
\begin{equation}\label{x CRT}
 x\mathop{=}^{\rm CRT}\big( 1+ p_1^{\xi_1}x_1,~\cdots,~
 1+ p_k^{\xi_k}x_k,~ x',~ 1\big)
\end{equation}
with $\xi_i=\nu_{p_i}(x-1)\ge\beta_i$,  $p_i\nmid x_i$ for $i=1,\cdots,k$,
and $x'\in{\Bbb Z}_{n'}$, and hence
any $s\in{\Bbb Z_{rn}}^*\cap(1+r{\Bbb Z}_{rn})$ can be represented as
\begin{equation}\label{s CRT}
 s\mathop{=}^{\rm CRT}\big( 1+ p_1^{\sigma_1}s_1,~\cdots,~
 1+ p_k^{\sigma_k}s_k,~ s',~ 1\big)
\end{equation}
with $\sigma_i=\nu_{p_i}(s-1)\ge\beta_i$ and $p_i\nmid s_i$ for
$i=1,\cdots,k$, and $s'\in{\Bbb Z}_{n'}^*$. In particular, since
$q\in{\Bbb Z}_{rn}^*\cap(1+r{\Bbb Z}_{rn})$, we have
\begin{equation}\label{q CRT}
 q \mathop{=}^{\rm CRT} \big( 1+ p_1^{\tau_1}q_1,~\cdots,~
 1+ p_k^{\tau_k}q_k,~ q',~ 1\big)
\end{equation}
where $\tau_i=\nu_{p_i}(q-1)\ge\beta_i$ and
$p_i\nmid q_i$ for $i=1,\cdots,k$, and $q'\in{\Bbb Z}_{n'}^*$.

We first need the following observation.

\begin{lem}\label{outside p}
Let $p$ be a prime.
If $p\notin\{p_1,\cdots,p_k\}$, then there is a $\mu_s$-orbit
in $1+r{\Bbb Z}_{rn}$  whose length is not divisible by $p$.
\end{lem}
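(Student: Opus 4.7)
The plan is to exhibit a single, very explicit element of $1+r\mathbb{Z}_{rn}$ whose $\mu_s$-orbit has length not divisible by $p$. By the CRT decomposition (\ref{Omega CRT}) together with the representation (\ref{s CRT}) of $s$, the action of $\mu_s$ on $1+r\mathbb{Z}_{rn}$ is componentwise: on the $i$-th factor $1+p_i^{\beta_i}\mathbb{Z}_{p_i^{\alpha_i+\beta_i}}$ it is multiplication by the unit $1+p_i^{\sigma_i}s_i$, on the factor $\mathbb{Z}_{n'}$ it is multiplication by $s'\in\mathbb{Z}_{n'}^*$, and on the last factor $\{1\}$ it is trivial.

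My candidate is the element $x\in 1+r\mathbb{Z}_{rn}$ whose CRT image is $(1,1,\ldots,1,\,0,\,1)$, i.e.\ the identity in each of the first $k$ factors, the zero element of $\mathbb{Z}_{n'}$, and $1$ in the last factor; this tuple lies in $1+r\mathbb{Z}_{rn}$ since $1\in 1+p_i^{\beta_i}\mathbb{Z}_{p_i^{\alpha_i+\beta_i}}$ and $0\in\mathbb{Z}_{n'}$. The $\mathbb{Z}_{n'}$-component of $x$ is absorbed by $\mu_{s'}$ because $0\cdot s'=0$, so that factor contributes orbit length $1$. In the $i$-th factor, the orbit of $1$ under multiplication by $1+p_i^{\sigma_i}s_i$ has length equal to the order of $1+p_i^{\sigma_i}s_i$ in the multiplicative group $1+p_i^{\beta_i}\mathbb{Z}_{p_i^{\alpha_i+\beta_i}}$, a group of cardinality $p_i^{\alpha_i}$, so this length is a power of $p_i$.

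Combining these contributions via Lemma~\ref{lcm orbit}(ii) (applied iteratively across the CRT factors, viewing $\mu_s$ as the diagonal action of a single cyclic generator), the $\mu_s$-orbit length of $x$ is the least common multiple of the component orbit lengths, and hence of the form $\prod_{i=1}^{k}p_i^{e_i}$ with $e_i\ge 0$. Since $p\notin\{p_1,\ldots,p_k\}$ by hypothesis, $p$ does not divide this product, finishing the argument. There is no real obstacle once the right test element is chosen; the only point requiring minor care is to justify applying Lemma~\ref{lcm orbit}(ii) to a single permutation whose CRT decomposition provides the product-of-actions structure, which is immediate from the definition of orbit length as the smallest positive $j$ with $\mu_s^{\,j}(x)=x$.
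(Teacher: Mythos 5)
Your proof is correct and follows essentially the same route as the paper: both exploit the CRT decomposition~(\ref{Omega CRT}), choose a test element with $0$ in the $\mathbb{Z}_{n'}$-coordinate so that factor contributes orbit length $1$, note that each $1+p_i^{\beta_i}\mathbb{Z}_{p_i^{\alpha_i+\beta_i}}$ is a group of $p_i$-power order so its orbit lengths are $p_i$-powers, and combine via Lemma~\ref{lcm orbit}(ii). The only (immaterial) differences are that you fix the specific element $(1,\dots,1,0,1)$ where the paper allows arbitrary first $k$ coordinates, and that the paper's proof additionally passes to the quotient $(1+r\mathbb{Z}_{rn})/\mu_q$ via Lemma~\ref{quotient-orbit}, which is the form needed later but not demanded by the statement as written.
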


\begin{proof}
Take $x_0\in(1+r{\Bbb Z}_{rn})$ such that (cf. Eqn (\ref{x CRT})):
$$
 x_0\mathop{=}^{\rm CRT}\big( 1+ p_1^{\xi_1}x_1,~\cdots,~
 1+ p_k^{\xi_k}x_k,~ 0 ,~ 1\big).
$$
Then the length of the $\mu_s$-orbit in ${\Bbb Z}_{n'}$ containing $0$
is $1$, and the length of the $\mu_s$-orbit in
$1+p_i^{\beta_i}{\Bbb Z}_{p_i^{\alpha_i+\beta_i}}$
containing $1+ p_i^{\xi_i}x_i$ is a power of $p_i$.
By Lemmas~\ref{lcm orbit} and \ref{quotient-orbit}, we see that
the length of the $\mu_s$-orbit in $(1+r{\Bbb Z}_{rn})/\mu_q$
containing the $q$-cyclotomic coset $x_0\langle q\rangle$
is not divisible  by $p$.
\end{proof}

\medskip
Recall that, for any $p_i$, $1\le i\le k$, both
$q$ and $s\pmod{p_i^{\alpha_i+\beta_i}}$ are contained in
the multiplicative group $1+p_i^{\beta_i}{\Bbb Z}_{p_i^{\alpha_i+\beta_i}}$;
we denote by $\langle q,s\rangle_i$ the subgroup of
$1+p_i^{\beta_i}{\Bbb Z}_{p_i^{\alpha_i+\beta_i}}$
generated by $q$ and $s$.

\begin{Theorem}\label{M_s by quotient}
Let
$M_s=\prod\limits_{i=1}^k |\langle q,s\rangle_i:\langle q\rangle_i|$
be the order of $s$ in the quotient group
$\prod\limits_{i=1}^k\big(1+p_i^{\beta_i}{\Bbb Z}_{p_i^{\alpha_i+\beta_i}}\big)
 /\langle q\rangle_i$.
Then $\mu_s$ is a Type I $m$-adic splitting for $1+r{\Bbb Z}_{rn}$
if and only if $m|M_s$.
\end{Theorem}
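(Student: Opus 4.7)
The plan is to invoke Lemma~\ref{adic-orbit}, which reduces the claim to proving that the greatest common divisor of the lengths of all $\mu_s$-orbits on the quotient set $(1+r\mathbb{Z}_{rn})/\mu_q$ equals $M_s$. I will compute these orbit lengths explicitly by exploiting the CRT decomposition~(\ref{Omega CRT}), on which both $\mu_q$ and $\mu_s$ act diagonally through each factor.

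Set $\mathcal{X}_i := 1+p_i^{\beta_i}\mathbb{Z}_{p_i^{\alpha_i+\beta_i}}$. Each $\mathcal{X}_i$ is a finite abelian $p_i$-group on which $\mu_q$ acts freely by left multiplication, so its $\mu_q$-orbits are precisely the cosets of $\langle q\rangle_i$. Lemma~\ref{quotient-orbit} then gives that the induced action of $\mu_s$ on $\mathcal{X}_i/\langle q\rangle_i$ is free, with every orbit of length $m_i:=|\langle q,s\rangle_i:\langle q\rangle_i|$, which is necessarily a power of $p_i$. Since $p_1,\ldots,p_k$ are distinct primes, the $m_i$ are pairwise coprime and $M_s=\prod_i m_i=\mathrm{lcm}_i\, m_i$.

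Now fix any $(x_1,\ldots,x_k,y,1)$ in the CRT identification; its $\mu_s$-orbit length in $(1+r\mathbb{Z}_{rn})/\mu_q$ is the smallest positive integer $L$ for which there exists an integer $j$ satisfying $s^L=q^j$ in every $\mathcal{X}_i$ and $s^Ly=q^j y$ in $\mathbb{Z}_{n'}$. The first family of conditions is equivalent to $s^L\in\langle q\rangle_i$ for every $i$, i.e.\ $m_i\mid L$ for every $i$, i.e.\ $M_s\mid L$. Once $M_s\mid L$, the equations $q^j=s^L$ in the individual $\mathcal{X}_i$ determine $j$ modulo each $|\langle q\rangle_i|$, and these moduli are pairwise coprime (being powers of distinct primes), so a common $j$ exists by the Chinese Remainder Theorem. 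Consequently $M_s$ divides every $\mu_s$-orbit length; and choosing $y=0$ makes the second condition vacuous, so the orbit through the CRT class of $(1,\ldots,1,0,1)$ has length exactly $M_s$. Hence $M_s$ is the gcd of all $\mu_s$-orbit lengths, and combining with Lemma~\ref{adic-orbit} yields the theorem.

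The main technical point to verify carefully is the simultaneous solvability of the congruences for $j$ across the $k$ prime factors; this is the step where the coprimality of the orders $|\langle q\rangle_i|$, ensuring both that $M_s=\prod_i m_i$ and that the CRT applies, is essential. The complementary realisation of an orbit of the minimum length $M_s$ relies on the presence of the zero element in the $\mathbb{Z}_{n'}$ factor, which is what prevents that coordinate from forcing a strictly longer orbit.
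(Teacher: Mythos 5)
Your proposal is correct and follows essentially the same route as the paper: reduce via Lemma~\ref{adic-orbit} to the $\mu_s$-orbit lengths on $(1+r\mathbb{Z}_{rn})/\mu_q$, compute them through the CRT decomposition~(\ref{Omega CRT}) using Lemma~\ref{quotient-orbit} on each $p_i$-component, and exhibit a minimal orbit by placing $0$ in the $\mathbb{Z}_{n'}$ coordinate (the paper's Lemma~\ref{outside p}). Your treatment is in fact slightly more careful than the paper's terse ``follows at once,'' since you explicitly justify, via the pairwise coprimality of the orders $|\langle q\rangle_i|$ and the Chinese Remainder Theorem, that a single exponent $j$ can be chosen uniformly across all components.
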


\begin{proof}
For any $x\in(1+r{\Bbb Z}_{rn})$ as in Eqn (\ref{x CRT}),
by Lemma \ref{quotient-orbit},
the length of the $\mu_s$-orbit in the quotient set
$\big(1+p_i^{\beta_i}{\Bbb Z}_{p_i^{\alpha_i+\beta_i}}\big)/\mu_q$
containing $(1+ p_i^{\xi_i}x_i)\langle q\rangle$ is equal to
$|\langle q,s\rangle_i:\langle q\rangle_i|$.
By Lemmas~\ref{adic-orbit},
\ref{lcm orbit} and \ref{outside p},
the theorem follows at once.
\end{proof}

Now we are ready to prove our main results.

\medskip
\noindent {\it Proof of Theorem~\ref{theorem1}.}~
Take an $\hat s\in{\Bbb Z}_{rn}^*\cap(1+r{\Bbb Z}_{rn})$ as follows
\big(cf. Eqn~(\ref{s CRT})\big):
$$
\hat s\mathop{=}^{\rm CRT}
 (1+p_1^{\hat \sigma_1},\,\cdots,\,1+p_k^{\hat\sigma_k},\,1,1)
$$
such that each component
$1+p_i^{\hat\sigma_i}\in 1+p_i^{\beta_i}{\Bbb Z}_{p_i^{\alpha_i+\beta_i}}$
of $\hat s$ becomes an element of maximal order in the quotient group
$\big(1+p_i^{\beta_i}{\Bbb Z}_{p_i^{\alpha_i+\beta_i}}\big)
 /\langle q\rangle_i$ for $i=1,\cdots,k$.
Set $M=M_{\hat s}$ as in Theorem~\ref{M_s by quotient}. Then,
for any $s\in {\Bbb Z}_{rn}^*\cap(1+r{\Bbb Z}_{rn})$, by
Theorem~\ref{M_s by quotient} we have $M_s\,|\,M$. Thus, for an
integer $m$, an $m$-adic $(q,n,r)$-constacyclic code of Type I
exists if and only if $m$ is a divisor of~$M$. It remains to
determine $M$ by its $p$-adic valuations $\nu_p(M)$, for all primes
$p$. If $p\notin\{p_1,\cdots,p_k\}$, then we have seen from
Theorem~\ref{M_s by quotient} that $\nu_p(M)=0$. For $1\le i\le k$,
by Theorem~\ref{M_s by quotient} and the choice of $\hat s$, we see
that $p_i^{\nu_{p_i}(M)}$ is the maximal order of elements of the
quotient group $\big(1+p_i^{\beta_i}{\Bbb
Z}_{p_i^{\alpha_i+\beta_i}}\big)
 /\langle q\rangle_i$; we determine it in the following cases.

{\it Case 1:} $p_i$ is odd or $\beta_i=\nu_{p_i}(r)\ge 2$. Then the group
$1+p_i^{\beta_i}{\Bbb Z}_{p_i^{\alpha_i+\beta_i}}$
is a cyclic group of order $p_i^{\alpha_i}$, and
the order of $q$ is
$p_i^{\max\{\alpha_i+\beta_i-\nu_{p_i}(q-1),\,0\}}$
\big(recall that $\nu_{p_i}(q-1)\ge \beta_i$ and the order of $q$
is $1$ when $\alpha_i+\beta_i\le\nu_{p_i}(q-1)$\big).
Thus, the maximal order of elements of the quotient group
$\big(1+p_i^{\beta_i}{\Bbb Z}_{p_i^{\alpha_i+\beta_i}}\big)
 /\langle q\rangle_i$
is $p_i^{\min\{\nu_{p_i}(q-1)-\beta_i,\,\alpha_i\}}$;
hence $\nu_{p_i}(M)=\min\{\nu_{p_i}(q-1)-\nu_{p_i}(r),\,\nu_{p_i}(n)\}$.

{\it Case 2:} $p_i=2$ and $\nu_{2}(r)=1$.
Then the group
$1+2{\Bbb Z}_{2^{\alpha_i+1}}={\Bbb Z}_{2^{\alpha_i+1}}^*
=\langle-1\rangle\times\langle 5\rangle$
and $|\langle 5\rangle|=2^{\alpha_i-1}$. There are two subcases:

{\it Subcase 2.1:} $\nu_2(q-1)\ge 2$. By Lemma~\ref{modulo 2}(i),
the quotient group ${\Bbb Z}_{2^{\alpha_i+1}}^*/\langle q\rangle$
is a direct product of a cyclic group of order
$2^{\min\{\nu_2(q-1)-2,\,\alpha_i-1\}}$ and a group of order $2$.
Thus
$$\nu_{2}(M)=\max\{\min\{\nu_2(q-1)-2,\,\nu_{2}(n)-1\},\,1\}.$$

{\it Subcase 2.2:} $\nu_2(q-1)=1$.
By Lemma~\ref{modulo 2}(ii),
the quotient group ${\Bbb Z}_{2^{\alpha_i+1}}^*/\langle q\rangle$
is a cyclic group of order
$2^{\min\{\nu_2(q+1)-1,\,\alpha_i-1\}}$.
Thus
$$\nu_{2}(M)=\min\{\nu_2(q+1)-1,\,\nu_{2}(n)-1\}.\eqno\qed$$

\medskip
\noindent {\it Proof of Theorem~\ref{theorem2}}.~
For $i=1,\cdots,k$, from Theorem~\ref{M_s by quotient},
in the quotient group
$\big(1+p_i^{\beta_i}{\Bbb Z}_{p_i^{\alpha_i+\beta_i}}\big)
 /\langle q\rangle_i$, we have seen that
$$
\nu_{p_i}(M_s)=|\langle q,s\rangle_i:\langle q\rangle_i|.
$$
If $p_i$ is odd, then
$1+p_i^{\beta_i}{\Bbb Z}_{p_i^{\alpha_i+\beta_i}}$ is a cyclic
group of order $p_i^{\nu_{p_i}(rn)-\nu_{p_i}(r)}$.
By Remark~\ref{modulo p}(ii), we get at once that
$|\langle q,s\rangle_i:\langle q\rangle_i|
 =p_i^{\max\{\min\{\nu_{p_i}(q-1),\,\nu_{p_i}(rn)\}-|\nu_{p_i}(s-1)|,\,0\}}$.
Otherwise, $p_i=2$ and all of the conclusions follow from
Lemma~\ref{modolo 2'} immediately.
\qed

\section{Corollaries and examples}

Most results on the existence of Type I polyadic constacyclic codes can
follow as consequences from the main theorems immediately.
Moreover,
with the help of the main results and the arguments,
some interesting examples can be constructed from
Type~{\rm I} polyadic constacyclic codes.
Here we describe the case when $m=p$ is a prime,
which is an interesting case.

\subsection{$p$-adic constacyclic codes}

\begin{Corollary}\label{cor-prime}
Let $m=p$ be a prime.
Then $p$-adic $(q,n,r)$-constacyclic codes of Type I exist if and only if
one of the following two conditions holds:
\begin{itemize}
\item[\bf(i)]
 $\nu_p(n)\ge 1$ and $\nu_p(q-1)>\nu_p(r)\ge 1$
 (the case $p=2$ is allowed);
\item[\bf(ii)] $p=2$,
$\nu_2(r)=1$ and $\min\{\nu_2(q+1),\nu_2(n)\}\ge 2$.
\end{itemize}
\end{Corollary}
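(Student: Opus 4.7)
The plan is to derive the corollary as a direct specialization of Theorem~\ref{theorem1} to $m=p$, since Type~I $p$-adic $(q,n,r)$-constacyclic codes exist if and only if $p\mid M$, which in turn is equivalent to $\nu_p(M)\ge 1$. Accordingly, I will go through the three clauses in the formula for $\nu_p(M)$ and ask precisely when each yields $\nu_p(M)\ge 1$; the union of these conditions gives the corollary.

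First I would dispense with the trivial cases: by the preamble of Theorem~\ref{theorem1}, if $p\nmid r$ or $p\nmid n$ then $\nu_p(M)=0$, so existence forces $\nu_p(r)\ge 1$ and $\nu_p(n)\ge 1$ in every surviving case. Next I handle Case (i) of Theorem~\ref{theorem1} (i.e., $p$ odd or $\nu_p(r)\ge 2$), where $\nu_p(M)=\min\{\nu_p(q-1)-\nu_p(r),\,\nu_p(n)\}$; clearly $\nu_p(M)\ge 1$ is equivalent to $\nu_p(q-1)>\nu_p(r)$ together with $\nu_p(n)\ge 1$, and the upper condition subsumes $\nu_p(r)\ge 1$. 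This matches clause (i) of the corollary whenever $p$ is odd, and also whenever $p=2$ with $\nu_2(r)\ge 2$.

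For Case (ii.1) of Theorem~\ref{theorem1} ($p=2$, $\nu_2(r)=1$, $\nu_2(q-1)\ge 2$), I would observe that $\nu_2(M)=\max\{\min\{\nu_2(q-1)-2,\,\nu_2(n)-1\},\,1\}\ge 1$ \emph{automatically}, so (granting $\nu_2(n)\ge 1$) the mere fact of lying in this subcase guarantees existence; and since $\nu_2(q-1)\ge 2>1=\nu_2(r)$, this subcase is also absorbed into clause (i) of the corollary (with $p=2$ allowed). Finally, Case (ii.2) of Theorem~\ref{theorem1} ($p=2$, $\nu_2(r)=1$, $\nu_2(q-1)=1$, equivalently $q\equiv-1\pmod 4$) gives $\nu_2(M)=\min\{\nu_2(q+1)-1,\,\nu_2(n)-1\}$, and $\nu_2(M)\ge 1$ is equivalent to $\nu_2(q+1)\ge 2$ and $\nu_2(n)\ge 2$, i.e., $\min\{\nu_2(q+1),\nu_2(n)\}\ge 2$; this is precisely clause (ii) of the corollary. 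Combining the three verifications exhausts all possibilities, finishing the proof.

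There is no real obstacle here; the whole argument is a case analysis driven by Theorem~\ref{theorem1}. The only point that requires care is checking that Case (ii.1) of the theorem correctly folds into clause (i) of the corollary (because the ``$\max\{\cdot,1\}$'' forces $\nu_2(M)\ge 1$ unconditionally, so the inequality $\nu_2(q-1)>\nu_2(r)$ happens to be automatic there and one must verify that no genuinely new condition is missed).
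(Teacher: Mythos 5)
Your proposal is correct and follows exactly the paper's route: the paper's proof of Corollary~\ref{cor-prime} is simply ``taking $m=p$ in Theorem~\ref{theorem1}'', and your case-by-case verification that $\nu_p(M)\ge 1$ in each clause (including the observation that Case~(ii.1) folds into clause~(i) because the $\max\{\cdot,1\}$ makes $\nu_2(M)\ge 1$ automatic there) is just the expanded version of that one-line argument.
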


\begin{proof}
Taking  $m=p$ in Theorem~\ref{theorem1}, we obtain the desired result.
\end{proof}

\begin{Remark}\rm
The case of $p=2$ in Corollary~\ref{cor-prime}, i.e.,
the necessary and sufficient conditions for the existence
of duadic constacyclic codes,
was treated in \cite[Corollary 17]{Blackford13} and stated
in different notations.

On the other hand,
if the prime $p$ is odd, then (ii) of Corollary~\ref{cor-prime}
is not applicable, hence the statement can be shortened; for example,
for $p=3$, the statement can read as

``{\it Triadic $(q,n,r)$-constacyclic codes of Type I exist if and only if
$\nu_3(n)\ge 1$ and $\nu_3(q-1)>\nu_3(r)\ge 1$.}''

\noindent
This result has been obtained in~\cite{Liu}.

\end{Remark}

Inspired by the conditions of Corollary~\ref{cor-prime}, we construct
a class of $p$-adic constacyclic generalized Reed-Solomon codes.
For nonzero $v_0,v_1,\cdots,v_{n-1}\in {\Bbb F}_q^*$ and
distinct $\alpha_0,\alpha_1,\cdots,\alpha_{n-1}\in {\Bbb F}_q$,
the following $[n,k,n-k+1]$ code
$$
 \Big\{\Big(v_0f(\alpha_0),v_1f(\alpha_1),\cdots,
  v_{n-1}f(\alpha_{n-1})\Big)\,\Big|\,
  f(X)\in{\Bbb F}_q[X],~\deg f(X)<k\Big\}
$$
is called a {\em generalized Reed-Solomon code},
abbreviated by {\em GRS code}, with locator
$\mbox{\boldmath$\alpha$}=(\alpha_0,\alpha_1,\cdots,\alpha_{n-1})$;
we denote this GRS code by
${\rm GRS}_k(\mbox{\boldmath$\alpha$};{\mathbf v})$,
where ${\mathbf v}=(v_0,v_1,\cdots,v_{n-1})$
(cf. \cite[Ch.9]{Lingbook}).

\begin{Proposition}\label{p-adic}
Assume that $m=p$ is a prime, $q$ is a prime power with
$\nu_p(q-1)\ge 2$, and $rn\mid(q-1)$ such that
$\nu_p(r)\geq 1$ and $\nu_p(n)\geq 1$
(then Corollary \ref{cor-prime}(i) is satisfied).
Let $\omega\in{\Bbb F}_q$ be a primitive $rn${\rm th} root of unity
and $\lambda=\omega^n$. Set
$$\textstyle
{\cal X}_j=\Big\{1+ir\,\Big|\,\frac{jn}{p}\leq i<\frac{(j+1)n}{p}\Big\},\qquad
 j=0,1,\cdots,p-1.
$$
Then
\begin{itemize}

\item[{\bf(i)}] $C_{{\cal X}_j}$, for $j=0,1,\cdots,p-1$, are
Type I $p$-adic $\lambda$-constacyclic codes
given by $\mu_{1+\frac{rn}{p}}$;

\item[{\bf(ii)}] for any $0<k<p$, the constacyclic code
$C=C_{{\cal X}_0}\oplus C_{{\cal X}_1}\oplus\cdots\oplus C_{{\cal X}_{k-1}}$
is the $[n,\frac{kn}{p},\frac{(p-k)n}{p}+1]$ GRS code
${\rm GRS}_{kn/p}(\mbox{\boldmath$\omega$};{\mathbf v})$,
where $\mbox{\boldmath$\omega$}=(1,\omega^{-r},\cdots,\omega^{-(n-1)r})$
and ${\mathbf v}=(1,\omega^{-1},\cdots,\omega^{-(n-1)})$.
\end{itemize}
\end{Proposition}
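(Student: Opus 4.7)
Part (i) is a bookkeeping verification. First I would observe that $i\mapsto 1+ir$ bijects $\{0,1,\ldots,n-1\}$ with $1+r{\Bbb Z}_{rn}$, so the intervals $[jn/p,(j+1)n/p)$ give a genuine partition into the ${\cal X}_j$. Because $rn\mid q-1$ forces $q\equiv 1\pmod{rn}$, the permutation $\mu_q$ is the identity on ${\Bbb Z}_{rn}$, so every subset is automatically $\mu_q$-invariant and every $q$-cyclotomic coset is a singleton. The key verification is that $s:=1+\frac{rn}{p}$ lies in ${\Bbb Z}_{rn}^*\cap(1+r{\Bbb Z}_{rn})$---here $p\mid n$ gives the containment, and $\nu_p(r),\nu_p(n)\ge 1$ force every prime divisor of $rn$ to also divide $rn/p$, giving the unit property---and that $\mu_s$ sends $1+ir$ to $1+r(i+n/p)$ modulo $rn$. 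The latter falls out of a one-line expansion in which the cross term $ir\cdot(rn/p)$ vanishes modulo $rn$ because $p\mid r$. This shows $\mu_s({\cal X}_j)={\cal X}_{j+1}$ and completes (i) via Definition~\ref{def-adic}.

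For (ii), combining the check polynomials of the $C_{{\cal X}_j}$ gives $h(X)=\prod_{\ell=0}^{kn/p-1}(X-\omega^{1+\ell r})$, so $\dim C=kn/p$ and the codewords are exactly those $c(X)=\sum_{a=0}^{n-1}c_aX^a$ with $c(\omega^{1+\ell r})=0$ for $\ell=kn/p,\ldots,n-1$. The plan is to exhibit the named GRS code inside $C$ and conclude by matching dimensions. Given $f(X)=\sum_{b=0}^{kn/p-1}f_bX^b$, set $c_a:=v_af(\omega^{-ar})=\omega^{-a}f(\omega^{-ar})$; with $\zeta:=\omega^r$ (a primitive $n$th root of unity since $r\mid q-1$), expansion and reordering yield
\[
c(\omega^{1+\ell r})=\sum_{b=0}^{kn/p-1}f_b\sum_{a=0}^{n-1}\zeta^{a(\ell-b)}.
\]
For $\ell\in\{kn/p,\ldots,n-1\}$ and $b\in\{0,\ldots,kn/p-1\}$ the difference satisfies $0<\ell-b<n$, so the inner geometric sum vanishes, whence ${\rm GRS}_{kn/p}(\mbox{\boldmath$\omega$};{\mathbf v})\subseteq C$. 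Equality now follows from $\dim {\rm GRS}_{kn/p}=kn/p=\dim C$, and the parameters $[n,kn/p,(p-k)n/p+1]$ come from the MDS property of GRS codes.

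The main obstacle is not conceptual depth but careful bookkeeping: the scaling ${\mathbf v}=(\omega^{-a})_a$ is designed precisely to cancel the $\omega^a$ factor arising when $\omega^{1+\ell r}$ is raised to the $a$th power, reducing the defining-set condition to the character-orthogonality $\sum_a\zeta^{a(\ell-b)}=0$ for $\ell\not\equiv b\pmod n$. Once this match-up is pinpointed, the identification with the GRS code and the MDS distance are automatic.
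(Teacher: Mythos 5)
Your proposal is correct and follows essentially the same route as the paper: part (i) by the direct computation $\mu_{1+rn/p}(1+ir)\equiv 1+r(i+n/p)\pmod{rn}$ with the cross term killed by $p\mid r$, and part (ii) by showing ${\rm GRS}_{kn/p}(\mbox{\boldmath$\omega$};{\mathbf v})\subseteq C$ via the vanishing geometric sums $\sum_{a}\omega^{ra(\ell-b)}=0$ and then matching dimensions. Your extra checks that $s=1+\frac{rn}{p}$ lies in ${\Bbb Z}_{rn}^*\cap(1+r{\Bbb Z}_{rn})$ and that $\mu_q$ is the identity (so $\mu_q$-invariance is automatic) are details the paper leaves implicit, and they are verified correctly.
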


\begin{proof} (i).~
Let $s=1+\frac{rn}{p}$. Noting that $p\mid r$, we have
$$
\mu_s(1+ir)=\Big(1+\frac{rn}{p}\Big)(1+ir)\equiv~
1+r\Big(\frac{n}{p}+i\Big)~(\bmod~rn).
$$
Hence $\mu_s({\cal X}_j)={\cal X}_{j+1}$ for $j=0,\cdots,p-2$,
and $\mu_s({\cal X}_{p-1})={\cal X}_0$.
Thus ${\cal X}_0$, ${\cal X}_1$, $\cdots$, ${\cal X}_{p-1}$ form a
Type I $p$-adic splitting of $1+r{\Bbb Z}_{rn}$ given by $\mu_s$.

(ii).~ Since $\omega^{-r}$ is a primitive $n$th root of unity,
$\mbox{\boldmath$\omega$}$ is a locator, and
$$
{\rm GRS}_{kn/p}(\mbox{\boldmath$\omega$};{\mathbf v})
=\Big\{\big(f(1),\omega^{-1}f(\omega^{-r}),\cdots,
  \omega^{-(n-1)}f(\omega^{-r(n-1)})\big)\,\Big|\,
  f(X)\in{\Bbb F}_q[X],~\deg f(X)<\frac{kn}{p}\Big\}
$$
is a GRS $[n,\frac{kn}{p},\frac{(p-k)n}{p}+1]$ code.
We need to show that
$C={\rm GRS}_{kn/p}(\mbox{\boldmath$\omega$};{\mathbf v})$.
Since $\dim C=\frac{kn}{p}$, it suffices to show that
${\rm GRS}_{kn/p}(\mbox{\boldmath$\omega$};{\mathbf v})\subseteq C$.

Set ${\cal K}={\cal X}_0\cup\cdots\cup{\cal X}_{k-1}$
and ${\cal K}'={\cal X}_k\cup\cdots\cup{\cal X}_{p-1}$.
Then $\prod_{Q\in{\cal K}/\mu_q }M_Q(X)$ is a check polynomial of~$C$,
hence $\{\omega^{t}\mid t\in{\cal K}'\}
  =\{\omega^{1+ir}\mid \frac{kn}{p}\le i<n\}$ is the set of zeros
of the code $C$.

For $f(X)=\sum_{j=0}^{\frac{kn}{p}-1}f_jX^j$ with $f_i\in{\Bbb F}_q$,
the codeword
$c'_f=\big(f(1),\omega^{-1}f(\omega^{-r}),\cdots,
  \omega^{-(n-1)}f(\omega^{-r(n-1)})\big)$ in
${\rm GRS}_{kn/p}(\mbox{\boldmath$\omega$};{\mathbf v})$
corresponds to the polynomial
$c'_f(X)=\sum_{t=0}^{n-1}\omega^{-t}f(\omega^{-rt})X^t$
 in the polynomial representation of codewords.
To prove that $c'_f\in C$, it is enough to show that
$c'_f(\omega^{1+ir})=0$ for $\frac{kn}{p}\le i<n$.
We compute $c'_f(\omega^{1+ir})$ as follows:
$$
c'_f(\omega^{1+ir})=\sum_{t=0}^{n-1}\omega^{-t}f(\omega^{-rt})\omega^{(1+ir)t}
=\sum_{t=0}^{n-1}\omega^{-t} \left(
\sum_{j=0}^{\frac{kn}{p}-1}f_j\omega^{-rtj}\right) \omega^{(1+ir)t}
=\sum_{j=0}^{\frac{kn}{p}-1}f_j\sum_{t=0}^{n-1}\omega^{r(i-j)t}.
$$
Since $\frac{kn}{p}\le i<n$ and $0\le j<\frac{kn}{p}$,
we see that $0<i-j<n$, hence $\omega^{r(i-j)}\ne 1$ as $\omega^{r}$
is a primitive $n$th root of unity. Then
$$\sum_{t=0}^{n-1}\omega^{r(i-j)t}
 =\frac{\omega^{r(i-j)n}-1}{\omega^{r(i-j)}-1}=0,\qquad {\mbox{ for }}
  \frac{kn}{p}\le i<n,~~0\le j<\frac{kn}{p}.$$
Therefore, $c'_f(\omega^{1+ir})=0$ for all $1+ir\in{\cal K}'$; we are done.
\end{proof}

Some GRS codes from Proposition~\ref{p-adic} are exhibited in Table~\ref{t1}.

\begin{table}[h]\label{t1}\begin{center}
{\footnotesize\renewcommand{\arraystretch}{1.8}
\begin{tabular}{|c|c|c|c|c|c|c|c|}\hline
No & $m$ & $q$ & $n$ & $r$ & GRS code & parameters\\ \hline
(i) & 3 & 19 & 6 & 3 &
 $\big\{(f(1), \omega^{-1}f(\omega^{-3}),\cdots,\omega^{-5}f(\omega^{-15}))
 ~\big|~f(X)\in\mathbb{F}_{19}[X],~\deg f(X)<4\big\}$  &[6,4,3] \\
(ii) & 3 & $2^6$ & 21 & 3 &
 $\big\{(f(1),\omega^{-1}f(\omega^{-3}),\cdots,\omega^{-20}f(\omega^{-60}))
 ~\big|~f(X)\in\mathbb{F}_{2^6}[X],~\deg f(X)<7\big\}$  &[21,7,15] \\
(iii) & 2 & 17 & 8  & 2 &
 $\big\{(f(1),\omega^{-1}f(\omega^{-2}), \cdots,\omega^{-7}f(\omega^{-14}))
 ~\big|~f(X)\in \mathbb{F}_{17}[X],~\deg f(X)<4\big\}$      &[8,4,5] \\
(iv) & 2 & $3^4$ & 40  & 2 &
 $\big\{(f(1),\omega^{-1}f(\omega^{-2}),\cdots,\omega^{-39}f(\omega^{-78}))
 ~\big|~f(X)\in \mathbb{F}_{3^4}[X],~\deg f(X)<20\big\}$   &[40,20,21] \\
(v) & 2 & $5^2$ & 12  & 2 &
 $\big\{(f(1),\omega^{-1}f(\omega^{-2}),\cdots,\omega^{-11}f(\omega^{-22}))
 ~\big|~f(X)\in \mathbb{F}_{5^2}[X],~\deg f(X)<6\big\}$   &[12,6,7] \\
(vi) & 2 & $7^2$ & 24  & 2 &
 $\big\{(f(1),\omega^{-1}f(\omega^{-2}),\cdots,\omega^{-23}f(\omega^{-46}))
 ~\big|~f(X)\in \mathbb{F}_{7^2}[X],~\deg f(X)<12\big\}$   &[24,12,13] \\
\hline
\end{tabular}}
\begin{caption}
{GRS codes from Type {\rm I} polyadic constacyclic codes}
\end{caption}
\end{center}\end{table}

\begin{Example}\label{ex1}\rm
An interesting particular case of Proposition~\ref{p-adic}
is as follows:
$m=r=2$, $n$ is an even divisor of $\frac{q-1}{2}$,
and the splitting of $1+2{\Bbb Z}_{2n}$ is
\begin{equation}\label{dual GRS}\textstyle
{\cal X}_0=\{1,\,3,\,\cdots,\,n-1\},\qquad
 {\cal X}_1=\{n+1,\,n+3,\,\cdots,\,2n-1\},
\end{equation}
(e.g., the codes (iii)-(vi) in Table~\ref{t1} where $n=\frac{q-1}{2}$).
It is a Type I duadic splitting of $1+2{\Bbb Z}_{2n}$
given by $\mu_{1+\frac{rn}{2}}=\mu_{n+1}$. However,
it is easy to check that ${\cal X}_0,{\cal X}_1$
also form a splitting of $1+2{\Bbb Z}_{2n}$ given by $\mu_{-1}$.
In other words,
both $C_{{\cal X}_0}$ and $C_{{\cal X}_1}$ are self-dual
duadic negacyclic GRS codes with parameters
$[n,\frac{n}{2},\frac{n}{2}+1]$.

The biggest choice of $n$ is $n=\frac{q-1}{2}$ and, in this case,
the self-dual duadic negacyclic GRS code $C_{{\cal X}_0}$
has parameters $[\frac{q-1}{2},\frac{q-1}{4},\frac{q+3}{4}]$.
\end{Example}

Before further analyzing this example, we discuss the particular
case of Theorem~\ref{theorem2} where $m=p$ is a prime.

\subsection{$p$-adic constacyclic codes given by $\mu_s$}

When $m=p$ is an odd prime in Theorem~\ref{theorem2}, the case is
easy, as shown in the following.

\begin{Corollary}\label{p odd} Assume that $m=p$ is an odd prime,
$s\in{\Bbb Z}_{rn}^*\cap(1+r{\Bbb Z}_{rn})$ and $s\ne 1$. Then
Type I $p$-adic splittings of $1+r{\Bbb Z}_{rn}$ given by $\mu_s$ exist
if and only if $p\,\big|\gcd(n,r)$ and
$\nu_p(s-1)<\min\{\nu_p(q-1),\nu_p(rn)\}$.
\end{Corollary}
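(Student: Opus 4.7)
The plan is to apply Theorem~\ref{theorem2} directly with $m=p$. By that theorem, a Type~I $p$-adic splitting of $1+r\mathbb{Z}_{rn}$ given by $\mu_s$ exists if and only if $p\mid M_s$, which is to say $\nu_p(M_s)\ge 1$. Since $p$ is an odd prime, only clause (i) of Theorem~\ref{theorem2} is relevant to computing $\nu_p(M_s)$; moreover, since $s\ne 1$ by hypothesis, $\nu_p(s-1)$ is a finite non-negative integer, so $|\nu_p(s-1)|=\nu_p(s-1)$ and the absolute-value brackets in the formula can be dropped.

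First I would observe that Theorem~\ref{theorem2} forces $\nu_p(M_s)=0$ whenever $p\nmid r$ or $p\nmid n$, so the condition $p\mid\gcd(n,r)$ is already necessary. Conversely, under the assumption $p\mid r$ and $p\mid n$, clause (i) of Theorem~\ref{theorem2} gives
$$
\nu_p(M_s)=\max\bigl\{\min\{\nu_p(q-1),\nu_p(rn)\}-\nu_p(s-1),\,0\bigr\}.
$$
The inequality $\nu_p(M_s)\ge 1$ is then equivalent to
$$
\min\{\nu_p(q-1),\nu_p(rn)\}-\nu_p(s-1)\ge 1,
$$
i.e., to $\nu_p(s-1)<\min\{\nu_p(q-1),\nu_p(rn)\}$, which is the second stated condition.

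Combining the two equivalences yields the corollary. No real obstacle arises, as this is a direct specialization of the main theorem to the case $m=p$ an odd prime; the only subtle point is that the hypothesis $s\ne 1$ is precisely what ensures $\nu_p(s-1)$ is finite and hence that the simplified formula above applies.
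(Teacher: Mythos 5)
Your proposal is correct and is exactly the specialization the paper intends: the corollary is stated as an immediate consequence of Theorem~\ref{theorem2}, and you carry out precisely that reduction (only clause (i) applies for odd $p$, the conditions $p\mid r$ and $p\mid n$ are needed to avoid $\nu_p(M_s)=0$, and $s\ne 1$ makes $|\nu_p(s-1)|$ finite so that $\nu_p(M_s)\ge 1$ unwinds to the stated inequality). Nothing further is needed.
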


For the remaining case of $m=p=2$ in Theorem~\ref{theorem2}, we
obtain the following consequence.

\begin{Corollary}\label{duadic}
Assume that $s\in{\Bbb Z}_{rn}^*\cap(1+r{\Bbb Z}_{rn})$. Then
Type I duadic splittings for $1+r{\Bbb Z}_{rn}$ given by $\mu_s$ exist
if and only if both $n$ and $r$ are even and one of the following four conditions holds:
\begin{itemize}
\item[\bf(i)]
$\nu_2(q-1)>|\nu_2(s-1)|$ and $\nu_2(rn)>|\nu_2(s-1)|$;
\item[\bf(ii)]
$\nu_2(q-1)=1$, $\nu_2(s-1)>1$,
 $\nu_{2}(q+1)+1>|\nu_2(s-1)|$ and $\nu_{2}(rn)>|\nu_2(s-1)|$;
\item[\bf(iii)]
$\nu_2(q-1)=\nu_2(s-1)=1$,~
 $|\nu_2(s+1)|>\nu_{2}(q+1)$ and $\nu_2(rn)>\nu_{2}(q+1)$;
\item[\bf(iv)]
$\nu_2(q-1)=\nu_2(s-1)=1$,~
 $|\nu_2(s+1)|<\nu_{2}(q+1)$ and $|\nu_2(s+1)|<\nu_{2}(rn)$.
\end{itemize}
\end{Corollary}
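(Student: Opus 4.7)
The plan is to specialize Theorem~\ref{theorem2} to the prime $m=p=2$: a Type I duadic splitting for $1+r\mathbb{Z}_{rn}$ given by $\mu_s$ exists if and only if $2\mid M_s$, i.e.\ $\nu_2(M_s)\geq 1$. Theorem~\ref{theorem2} forces $\nu_2(M_s)=0$ the moment $2\nmid r$ or $2\nmid n$, which yields the necessity of the blanket hypothesis that both $n$ and $r$ are even. From here on I would assume both are even and translate each of the four cases (i)--(iv) of Theorem~\ref{theorem2} into the inequality $\nu_2(M_s)\geq 1$, then match the resulting conditions against the four clauses of the corollary.

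Cases (i) and (ii) of Theorem~\ref{theorem2} are straightforward: the formula is $\nu_2(M_s)=\max\{\min\{A,B\}-C,\,0\}$ for the appropriate $A,B,C$, and ``$\geq 1$'' unpacks at once to $A\geq C+1$ and $B\geq C+1$, yielding clauses (i) and (ii) of the corollary for the sub-region where $\nu_2(s-1)\geq 2$. Case (iii) of Theorem~\ref{theorem2} is cleaner: the outer $\max\{\cdot,1\}$ makes $\nu_2(M_s)\geq 1$ automatic, so every pair with $\nu_2(q-1)\geq 2$ and $\nu_2(s-1)=1$ gives a duadic splitting. I would observe that these configurations also satisfy clause (i) of the corollary, since with $|\nu_2(s-1)|=1$ the requirements $\nu_2(q-1)>1$ and $\nu_2(rn)>1$ are both automatic (the latter because $n,r$ are both even).

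The real bookkeeping lives in case (iv) of Theorem~\ref{theorem2}. If $\nu_2(s+1)=\nu_2(q+1)$ then $\nu_2(M_s)=0$, so no duadic splitting exists (which is why no such clause appears in the corollary). Otherwise, unpacking $\min\{\nu_2(q+1)+1,\,\nu_2(rn)\}\geq \min\{|\nu_2(s+1)|,\,\nu_2(q+1)\}+1$, I would split according to the sign of $|\nu_2(s+1)|-\nu_2(q+1)$. The branch $|\nu_2(s+1)|>\nu_2(q+1)$ collapses the right-hand minimum to $\nu_2(q+1)$, making the $\nu_2(q+1)+1$ term non-binding and leaving only $\nu_2(rn)>\nu_2(q+1)$, which is clause (iii). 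The branch $|\nu_2(s+1)|<\nu_2(q+1)$ collapses the right-hand minimum to $|\nu_2(s+1)|$, again makes the $\nu_2(q+1)+1$ term non-binding, and leaves $\nu_2(rn)>|\nu_2(s+1)|$, which is clause (iv). The only delicate point in the whole argument is the extremal case $s=-1$, which is absorbed through the convention $|\nu_2(s+1)|=\infty$ and lands harmlessly inside the first branch of case (iv). The main ``obstacle'' is thus not any hard estimate but keeping the overlap between clause (i) of the corollary and case (iii) of Theorem~\ref{theorem2} clean, and correctly identifying when each minimum inside case (iv) is non-binding.
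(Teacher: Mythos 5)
Your proposal is correct and follows exactly the paper's route: specialize Theorem~\ref{theorem2} to $m=2$, require $\nu_2(M_s)\ge 1$, and translate each of the four cases, with cases (i) and (iii) of the theorem both landing in clause (i) of the corollary. Your unpacking of case (iv) and the $s=-1$ convention is more detailed than the paper's one-line justification but matches it.
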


\begin{proof}
By Theorem~\ref{theorem2},
we need to look for a condition such that $\nu_2(M_s)\ge 1$.
If $\nu_2(q-1)\ge 2$, by (i) and (iii) of Theorem~\ref{theorem2},
we arrive at (i) of the corollary. Furthermore, (ii) of the corollary
follows from (ii) of Theorem~\ref{theorem2},
while (iii) and (iv) of the corollary
follow from (iv) of Theorem~\ref{theorem2}.
\end{proof}

\begin{Remark}\rm
Note that, if $s\in{\Bbb Z}_{rn}^*\cap(1+r{\Bbb Z}_{rn})$ and $r$ is even,
then $s$ is odd, i.e., $|\nu_2(s-1)|\ge 1$;
hence Condition (i) of Corollary~\ref{duadic} implies that
$\nu_2(q-1)\ge 2$, or equivalently, $q\equiv 1~({\rm mod}~4)$.
Hence, Corollary~\ref{duadic}(i) yields again, but in different notations,
the result \cite[Theorem 20]{Blackford13} for the case
when $q\equiv 1~({\rm mod}~4)$. Moreover,
a special case of  Corollary~\ref{duadic} (iii) and (iv)
was also described in \cite[Theorem 20]{Blackford13},
which, however, contains some inaccuracies.
A correction to \cite[Theorem 20]{Blackford13} has been shown
in \cite[Theorem 1.3]{Chen} as follows:

\medskip
\begin{quote}
{\it
Assume $q\equiv3~(\bmod~4)$,
with $q=-1+2^cd$ for some $c\geq2$ and some odd $d$.
Let $r=2r'$,  $n=2^bn'$ and $s=1+2r'n'$, with $r', n'$ odd and $b\geq2$.
\begin{itemize}

\item[{\rm(A)}]~ $\mu_s$ is a Type I duadic splitting
for $1+r{\Bbb Z}_{rn}$ if and only if
one of the following conditions holds:
{\rm (1)}~$c>b>\nu_2(1+r'n')$; {\rm (2)}~$b\geq c>\nu_2(1+r'n')$.

\item[{\rm(B)}]~For $2\leq i<1+b$, $\mu_{1+2^ir'n'}$  is a
Type I duadic splitting  for $1+r{\Bbb Z}_{rn}$ if and only if $i\leq c$.

\end{itemize}}
\end{quote}

\medskip\noindent One can see that  statement (B) follows from
(ii) of Corollary~\ref{duadic}, while statement (A) follows from
(iii) and (iv) of the corollary.
Moreover, (ii), (iii) and (iv) of Corollary~\ref{duadic}
are more extensive than the result of~\cite{Chen} stated above,
e.g.,  the case ``$s=-1$'' does not appear in
\cite[Theorem 1.3]{Chen} but is included in Corollary~\ref{duadic}:
since $\nu_2((-1)-1)=1$ and $|\nu_2((-1)+1)|=\infty$,
the following corollary follows at once.
\end{Remark}

\begin{Corollary}\label{s=-1}
Type I duadic splittings for $1+r{\Bbb Z}_{rn}$ given by $\mu_{-1}$ exist
if and only if $n$ is even, $r=2$ and one of the following two conditions holds:
\begin{itemize}

\item[\bf(i)]
 $\nu_2(q-1)\ge 2$ (i.e., $q\equiv 1~({\rm mod}~4)$);

\item[\bf(ii)]
 $\nu_2(q-1)=1$ (i.e., $q\equiv 3~({\rm mod}~4)$) and $\nu_2(q+1)<\nu_2(rn)$.

\end{itemize}
\end{Corollary}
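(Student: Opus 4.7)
The plan is to derive Corollary~\ref{s=-1} as a direct specialization of Corollary~\ref{duadic} with $s=-1$, reading off which of its four disjunctive cases survive once the conventions for $\nu_2$ at $0$ are substituted.

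First I would verify that $s=-1$ can legitimately lie in $\mathbb{Z}_{rn}^*\cap(1+r\mathbb{Z}_{rn})$. Coprimality of $-1$ with $rn$ is automatic, while the membership $-1\in 1+r\mathbb{Z}_{rn}$ amounts to $-1\equiv 1\pmod r$, forcing $r\mid 2$. Combined with the blanket requirement in Corollary~\ref{duadic} that both $n$ and $r$ be even, this pins down $r=2$ and $n$ even, accounting for two of the three conditions in the present statement.

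Next I would record the relevant $2$-adic invariants of $s=-1$: namely $\nu_2(s-1)=\nu_2(-2)=1$ and, by the stated convention $|\nu_2(0)|=\infty$, $|\nu_2(s+1)|=\infty$. With these values in hand I would walk through the four cases of Corollary~\ref{duadic}. Case (i) demands $\nu_2(q-1)>|\nu_2(s-1)|=1$ and $\nu_2(rn)>1$; the latter is automatic from $r=2$ and $n$ even, so the case reduces to $\nu_2(q-1)\ge 2$, yielding precisely item (i) of the present corollary. Case (ii) requires $\nu_2(s-1)>1$ and is therefore vacuous. Case (iv) requires $|\nu_2(s+1)|<\nu_2(q+1)$, which fails since $|\nu_2(s+1)|=\infty$. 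Case (iii) requires $\nu_2(q-1)=\nu_2(s-1)=1$ (the second equality already being in hand), the strict inequality $|\nu_2(s+1)|>\nu_2(q+1)$ (automatic), and $\nu_2(rn)>\nu_2(q+1)$; this is exactly item (ii) of the present corollary.

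Assembling the contributions from the two surviving cases gives the claimed biconditional. The entire argument is bookkeeping around the conventions $|\nu_2(0)|=\infty$ and $\nu_2(-2)=1$; since Corollary~\ref{duadic} is available as a black box, no substantive new input is needed and I anticipate no real obstacle beyond faithfully tracking the infinite-valuation convention through cases (iii) and (iv).
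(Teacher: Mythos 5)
Your proposal is correct and follows exactly the route the paper takes: the paper states (in the remark preceding the corollary) that the result "follows at once" from Corollary~\ref{duadic} by substituting $\nu_2((-1)-1)=1$ and $|\nu_2((-1)+1)|=\infty$, which is precisely your case-by-case bookkeeping. Your additional observation that $-1\in 1+r\mathbb{Z}_{rn}$ forces $r\mid 2$, hence $r=2$ once $r$ is even, is a correct and worthwhile detail that the paper leaves implicit.
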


As mentioned in \cite{Blackford08},
Euclidean self-dual negacyclic codes are just
Type I duadic negacyclic codes given by $\mu_{-1}$.
In this sense, Corollary \ref{s=-1}
is just \cite[Theorem 3]{Blackford08}.

\subsection{Alternant constacyclic MDS codes}

By an {\em alternant code}, we mean a subfield subcode of a GRS code
${\rm GRS}_k(\mbox{\boldmath$\alpha$};{\mathbf v})$
over a large field ${\Bbb F}_{q^e}$, i.e.,
the code over the ground field ${\Bbb F}_q$, denoted by
${\rm GRS}_k(\mbox{\boldmath$\alpha$};{\mathbf v})|_{{\Bbb F}_q}$,
obtained by restricting the GRS code
${\rm GRS}_k(\mbox{\boldmath$\alpha$};{\mathbf v})$
over ${\Bbb F}_{q^e}$ to ${\Bbb F}_q$
(cf. \cite[Ch.~9]{Lingbook}).

For the case (i) of Corollary~\ref{s=-1},
we have shown in Example~\ref{ex1} a family of self-dual
negacyclic GRS codes with parameters
$[\frac{q-1}{2},\frac{q-1}{4},\frac{q+3}{4}]$.
On the other hand, it is easy to see that
there are no self-dual negacyclic GRS codes for the case
(ii) of Corollary~\ref{s=-1}: since $2n\nmid(q-1)$, there are no
primitive $2n$th roots of unity in ${\Bbb F}_q$.
However, Proposition~\ref{p-adic} and Example~\ref{ex1}
provide a way to construct self-dual negacyclic alternant
MDS codes for both the cases of Corollary~\ref{s=-1}.

\begin{Proposition}\label{duadic alter}
Assume that $q$ is a power of an odd prime. Let $n=\frac{q+1}{\ell}$
with $\ell$ being an odd divisor of $q+1$,
let $\omega\in{\Bbb F}_{q^2}$ be a primitive $2n$th root of unity, and let
$$\textstyle
{\cal X}_0=\{1,\,3,\,\cdots,\,n-1\},\qquad
 {\cal X}_1=\{n+1,\,n+3,\,\cdots,\,2n-1\},
$$
as in Eqn~(\ref{dual GRS}). Then
\begin{itemize}

\item[\bf(i)]
${\cal X}_0$ and ${\cal X}_1$ form
a Type I duadic splitting of $1+2{\Bbb Z}_{2n}$ over ${\Bbb F}_{q}$
given by $\mu_{-1}$;

\item[\bf(ii)] the duadic negacyclic codes
$C_{{\cal X}_0}, C_{{\cal X}_1}$ over ${\Bbb F}_{q}$
are self-dual duadic negacyclic MDS $[n,\frac{n}{2},\frac{n}{2}+1]$ codes;

\item[\bf(iii)]
$C_{{\cal X}_0}=
{\rm GRS}_{n/2}(\mbox{\boldmath$\omega$};{\mathbf v})|_{{\Bbb F}_q}$
is an alternant code,
where  ${\rm GRS}_{n/2}(\mbox{\boldmath$\omega$};{\mathbf v})$
is the GRS code over ${\Bbb F}_{q^2}$ with
$\mbox{\boldmath$\omega$}=(1,\omega^{-2},\cdots,\omega^{-2(n-1)})$
and ${\mathbf v}=(1,\omega^{-1},\cdots,\omega^{-(n-1)})$.

\end{itemize}
\end{Proposition}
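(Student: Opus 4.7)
The plan is to prove (i) by a direct modular computation and use it, via Proposition~\ref{p-adic} applied over the quadratic extension ${\Bbb F}_{q^2}$, to obtain (iii) and then (ii). For (i), the key observation is that $q \equiv n-1 \pmod{2n}$: writing $q+1 = \ell n$ with $\ell$ odd, and noting that $n$ is even (since $q$ is odd and $\ell$ is odd, $n$ inherits all of the $2$-part of $q+1$), we get $\ell n \equiv n \pmod{2n}$. Then for any odd $1+2i$, $\mu_q(1+2i) \equiv (n-1)(1+2i) \equiv n-1-2i \pmod{2n}$, which preserves each of ${\cal X}_0$ and ${\cal X}_1$, while $\mu_{-1}(1+2i) \equiv 2n-1-2i \pmod{2n}$ swaps them. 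Hence $\{{\cal X}_0, {\cal X}_1\}$ is a Type I duadic splitting of $1+2{\Bbb Z}_{2n}$ over ${\Bbb F}_q$ given by $\mu_{-1}$.

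For (iii), I would apply Proposition~\ref{p-adic} with parameters $(q^2, n, 2)$ and $m=p=2$, $k=1$. Its hypotheses hold: $\nu_2(q^2-1)\geq 3$, $\nu_2(r)=1$, $\nu_2(n)\geq 1$, and $2n \mid q^2-1$ (since $n \mid q+1$). In this extension $q^2 \equiv 1 \pmod{2n}$, so $\mu_{q^2}$ is trivial, and the proposition identifies the negacyclic code $\tilde{C}_{{\cal X}_0} \subseteq {\Bbb F}_{q^2}[X]/\langle X^n+1\rangle$ with check polynomial $\prod_{i \in {\cal X}_0}(X-\omega^i)$ as the GRS code ${\rm GRS}_{n/2}(\mbox{\boldmath$\omega$}; {\mathbf v})$ of the statement. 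Then $C_{{\cal X}_0}$ is precisely the ${\Bbb F}_q$-rational subcode of $\tilde{C}_{{\cal X}_0}$, i.e.\ the subfield subcode ${\rm GRS}_{n/2}(\mbox{\boldmath$\omega$}; {\mathbf v})|_{{\Bbb F}_q}$, since both are cut out by the same zero conditions $c(\omega^i) = 0$ for $i \in {\cal X}_1$.

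For (ii), both assertions then follow. MDS: the ambient $\tilde{C}_{{\cal X}_0}$ has parameters $[n,n/2,n/2+1]$, so the minimum distance of $C_{{\cal X}_0}$ is $\geq n/2+1$; combined with $\dim_{{\Bbb F}_q} C_{{\cal X}_0} = |{\cal X}_0| = n/2$, the Singleton bound forces $d = n/2+1$. Self-duality: the zero set of $C_{{\cal X}_0}^\perp$, as a negacyclic code, is $\{\omega^{-j} : j \in (1+2{\Bbb Z}_{2n}) \setminus {\cal X}_1\} = \{\omega^{-j} : j \in {\cal X}_0\}$; by (i), $\mu_{-1}({\cal X}_0) = {\cal X}_1$, so this set coincides with $\{\omega^i : i \in {\cal X}_1\}$, the zero set of $C_{{\cal X}_0}$, and a dimension count finishes $C_{{\cal X}_0}^\perp = C_{{\cal X}_0}$. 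The main step needing care is the subfield-subcode identification in (iii); after that, (ii) is a routine inheritance argument from ${\Bbb F}_{q^2}$ to ${\Bbb F}_q$ combined with the zero-set computation in (i).
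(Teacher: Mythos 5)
Your proposal is correct and follows essentially the same route as the paper: the same congruence $q\equiv n-1\pmod{2n}$ for (i), and the same reduction of (ii)--(iii) to Proposition~\ref{p-adic} applied over ${\Bbb F}_{q^2}$ followed by a subfield-subcode identification. The only (harmless) difference is that you justify $C_{{\cal X}_0}=\tilde C_{{\cal X}_0}|_{{\Bbb F}_q}$ via the common zero set $\{\omega^i\mid i\in{\cal X}_1\}$, whereas the paper uses the decomposition $\tilde C_{{\cal X}_0}=C_{{\cal X}_0}\oplus\omega C_{{\cal X}_0}$ and a dimension count, and you spell out the self-duality via the dual's zero set where the paper appeals to Example~\ref{ex1}.
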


\begin{proof}
Note that, for any odd integer $t$, we have $tn\equiv n~({\rm mod}~2n)$.
Since $q=\ell n -1$ with $\ell$ being odd,
we have $q\equiv n-1~({\rm mod}~2n)$.
For any $i\in{\cal X}_0$, since $i$ is odd, we have
$$qi\equiv (n-1)i=ni-i\equiv n-i~({\rm mod}~2n).$$
Thus $\mu_q({\cal X}_0)={\cal X}_0$, i.e., both
${\cal X}_0$ and ${\cal X}_1$ are $\mu_q$-invariant,
which proves the conclusion (i).

By Proposition~\ref{p-adic} and Example~\ref{ex1},
the duadic negacyclic code $\tilde C_{{\cal X}_0}$
over ${\Bbb F}_{q^2}$ is a self-dual negacyclic GRS code
as follows:
\begin{eqnarray*}
\tilde C_{{\cal X}_0}&=&{\rm GRS}_{n/2}(\mbox{\boldmath$\omega$};{\mathbf v})\\
&=&\Big\{\Big(f(1),\,\omega^{-1}f(\omega^{-2}),
 \,\cdots,\,\omega^{-(n-1)}f(\omega^{-2(n-1)})\Big)\,\Big|\,
  f(X)\in{\Bbb F}_{q^2}[X],~ \deg f(X)<\frac{n}{2}\Big\}.
\end{eqnarray*}
Note that $C_{{\cal X}_0}\subseteq \tilde C_{{\cal X}_0}$
and $\omega C_{{\cal X}_0}\subseteq \tilde C_{{\cal X}_0}$, and that
$\dim_{\mathbb{F}_q}\tilde C_{{\cal X}_0}
 =2\dim_{\mathbb{F}_{q^2}}\tilde C_{{\cal X}_0}=n$.
We have the direct sum
$\tilde C_{{\cal X}_0} = C_{{\cal X}_0} \oplus \omega C_{{\cal X}_0}$.
Therefore, $C_{{\cal X}_0}=\tilde C_{{\cal X}_0}|_{{\Bbb F}_q}$
is the desired subfield subcode of the code~$\tilde C_{{\cal X}_0}$.
Both the conclusions (ii) and (iii) now follow easily.
\end{proof}

\begin{Remark}\rm The biggest choice of $n$ in
Proposition~\ref{duadic alter} is $n=q+1$.
For this choice, the self-dual duadic negacyclic
alternant MDS code $C_{{\cal X}_0}$ has parameters
$[q+1,\frac{q+1}{2},\frac{q+3}{2}]$.
Blackford \cite[Corollary 5]{Blackford08} has constructed
this self-dual negacyclic $[q+1,\frac{q+1}{2},\frac{q+3}{2}]$ code,
but did not show it to be an alternant code.
\end{Remark}

\begin{Proposition}\label{ex2}
Let $q$ be an odd prime power such that $\nu_2(q-1)\ge 3$.
Let $r=\frac{q-1}{2}$, $s=1+\frac{q^2-1}{4}$,  $n=q+1$,
let $\omega\in{\Bbb F}_{q^2}$ be a primitive $rn${\rm th} root of unity  and let
$$
{\cal X}_0=\Big\{1+\frac{q-1}{2}j\,\Big{|}\,
  -\frac{q-1}{4}<j\le\frac{q-1}{4}+1\Big\},~~~
  {\cal X}_1=(1+r\mathbb{Z}_{rn})\setminus{\cal X}_0.
$$
Then
\begin{itemize}

\item[(i)]
${\cal X}_0$ and ${\cal X}_1$ form
a Type I duadic splitting of $1+r{\Bbb Z}_{rn}$ over ${\Bbb F}_{q}$
given by $\mu_{s}$;

\item[(ii)] the duadic constacyclic codes $C_{{\cal X}_0}$ and
$C_{{\cal X}_1}$ over ${\Bbb F}_{q}$
are MDS $[n,\frac{n}{2},\frac{n}{2}+1]$ codes;

\item[(iii)] $C_{{\cal X}_1}=
{\rm GRS}_{n/2}(\mbox{\boldmath$\omega$};{\mathbf v})|_{{\Bbb F}_q}$
is an alternant code,
where  ${\rm GRS}_{n/2}(\mbox{\boldmath$\omega$};{\mathbf v})$
is the GRS code over ${\Bbb F}_{q^2}$ with
$\mbox{\boldmath$\omega$}=\big(1,\omega^{r},\cdots,\omega^{(n-1)r}\big)$
and ${\bf v}=
\big(1,\omega^{\frac{q-1}{4}r-1},\omega^{\frac{q-1}{4}2r-2},
  \cdots,\omega^{\frac{q-1}{4}(n-1)r-(n-1)}\big)$.

\end{itemize}
\end{Proposition}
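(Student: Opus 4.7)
The plan is to first analyse how $\mu_s$ and $\mu_q$ act on $1+r{\Bbb Z}_{rn}$ so as to deduce (i), then establish (iii) by an explicit evaluation calculation in the spirit of Proposition~\ref{p-adic}, from which (ii) follows quickly via the Singleton bound.

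Parametrise $1+r{\Bbb Z}_{rn}$ by $j\in{\Bbb Z}_n$ via $1+rj$, and set $\eta:=\omega^r$, a primitive $n$th root of unity. The hypothesis $\nu_2(q-1)\ge 3$ gives $\nu_2(r)\ge 2$, so $r/2$ and $rn/2$ are integers; and since $n=q+1$, we have $q\equiv -1\pmod n$, hence $q^2\equiv 1\pmod{rn}$. A direct computation modulo $rn$ (using $(1+2r)(1+rj)=1+rj+2r+2r^2j$ together with $2r^2j\equiv -2rj\pmod{rn}$, which in turn follows from $2r\equiv -2\pmod n$) yields
\[
\mu_s(1+rj)=1+r\bigl(j+\tfrac{n}{2}\bigr),\qquad
\mu_q(1+rj)=1+r(2-j).
\]
Writing $a:=(q-1)/4$, so that $n/2=2a+1$ and $n=4a+2$, the set ${\cal X}_0$ corresponds to the interval $j\in\{1-a,\,2-a,\,\ldots,\,a+1\}$, symmetric about $j=1$. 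The involution $j\mapsto 2-j$ fixes this interval setwise, so ${\cal X}_0$ (and therefore ${\cal X}_1$) is $\mu_q$-invariant; the shift $j\mapsto j+n/2$ carries it to its complement modulo $n$, so $\mu_s({\cal X}_0)={\cal X}_1$, and $\mu_s^2={\rm id}$ gives also $\mu_s({\cal X}_1)={\cal X}_0$. This proves (i).

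For (iii), let $\tilde C$ be the constacyclic code over ${\Bbb F}_{q^2}$ of dimension $n/2$ with zero set $\{\omega^{1+rj}\mid j\in{\cal X}_0\}$; by the $\mu_q$-invariance of ${\cal X}_0$, its subfield subcode over ${\Bbb F}_q$ is exactly $C_{{\cal X}_1}$, so it suffices to prove $\tilde C={\rm GRS}_{n/2}(\mbox{\boldmath$\omega$};{\mathbf v})$, and since both sides have ${\Bbb F}_{q^2}$-dimension $n/2$, the inclusion ``$\supseteq$'' will do. Given $f(X)=\sum_{k=0}^{n/2-1}f_kX^k$, the corresponding GRS codeword has polynomial form $c(X)=\sum_{i=0}^{n-1}v_i f(\eta^i)X^i$ with $v_i=\omega^{i((q-1)r/4-1)}$; using $\omega^{(q-1)ri/4}=\eta^{ai}$ I compute
\[
c(\omega^{1+jr})=\sum_{i=0}^{n-1}\eta^{ai}\eta^{ij}f(\eta^i)
=\sum_{k=0}^{n/2-1}f_k\sum_{i=0}^{n-1}\eta^{i(k+j+a)}.
\]
For $j\in{\cal X}_0$ and $0\le k\le 2a$, the exponent $k+j+a$ ranges over $\{1,2,\ldots,4a+1\}$, which contains no multiple of $n=4a+2$; hence the inner sum vanishes for every such pair, giving $c(\omega^{1+jr})=0$ for all $j\in{\cal X}_0$, and (iii) follows.

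Finally, $\tilde C$ is MDS with parameters $[n,n/2,n/2+1]$ over ${\Bbb F}_{q^2}$, so its subfield subcode $C_{{\cal X}_1}$ has ${\Bbb F}_q$-dimension $|{\cal X}_1|=n/2$ (from its check polynomial) and minimum distance at least $n/2+1$; the Singleton bound forces equality, proving $C_{{\cal X}_1}$ is MDS. Since $\hat\mu_s$ is a weight-preserving algebra isometry sending $C_{{\cal X}_0}$ onto $C_{{\cal X}_1}$, the code $C_{{\cal X}_0}$ inherits these parameters, completing (ii). The main obstacle is the bookkeeping in (iii): the factor $\omega^{(q-1)ri/4}$ in $v_i$ is tuned precisely so that the shifted window $k+j+(q-1)/4$, with $k$ in an interval of length $n/2$ and $j$ in the duadic half ${\cal X}_0$, remains strictly inside $(0,n)$; any miscounted endpoint would produce a nonzero contribution to the character sum and break the GRS identification.
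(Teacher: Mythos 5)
Your proposal is correct and takes essentially the same route as the paper: prove (i) via the congruences $q(1+rj)\equiv 1+r(2-j)$ and $s(1+rj)\equiv 1+r(j+\tfrac{n}{2})\pmod{rn}$, identify the extension-field code over ${\Bbb F}_{q^2}$ with the GRS code by the vanishing character-sum computation (which the paper only sketches by reference to Proposition~\ref{p-adic}), and deduce (ii) from the subfield-subcode containment together with the Singleton bound and the isometry $\hat\mu_s$. Your write-up merely makes explicit the evaluation argument and the disjointness step that the paper leaves terse.
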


\begin{proof}
It is clear that $s\in{\Bbb Z}_{rn}^*\cap(1+r{\Bbb Z}_{rn})$.
To prove (i), it is enough to
show  that ${\cal X}_0$ is a union of some $q$-cyclotomic cosets
modulo $rn$ with $|{\cal X}_0|=\frac{q+1}{2}$ and
$s{\cal X}_0\bigcap {\cal X}_0=\emptyset$.
Clearly, $|{\cal X}_0|=\frac{q+1}{2}$.
To show that ${\cal X}_0$ is a union of some $q$-cyclotomic cosets,
it suffices to prove that $q(1+\frac{q-1}{2}j)\in {\cal X}_0$ for any
$-\frac{q-1}{4}<j\leq\frac{q-1}{4}+1$.
This is straightforward:
$q(1+\frac{q-1}{2}j)\equiv~1+\frac{q-1}{2}(2-j)~(\bmod~rn)$
and $-\frac{q-1}{4}+1\leq2-j<\frac{q-1}{4}+2$.
We are left to show that $s{\cal X}_0\bigcap {\cal X}_0=\emptyset$.
Assuming otherwise, then two integers $j, j'$
with $-\frac{q-1}{4}<j,j'\leq\frac{q-1}{4}+1$ can be found such that
$1+\frac{q-1}{2}j\equiv 1+\frac{q-1}{2}(j'+\frac{q+1}{2})
~(\bmod~\frac{q^2-1}{2})$.
We then have $j-j'\equiv \frac{q+1}{2}~(\bmod~q+1)$, which is impossible.
Thus $\{{\cal X}_0, ~{\cal X}_1\}$ is a splitting of $1+r{\Bbb Z}_{rn}$
given by $\mu_s$, proving (i).

Observe that ${\rm ord}_{rn}(q)=2$.
Let $\tilde C_{{\cal X}_1}$ be the constacyclic code
of length $q+1$ over $\mathbb{F}_{q^2}$ with check polynomial
$\prod_{Q\in{\cal X}_1/\mu_q}M_Q(X)$.
Hence, $\{\omega^{j}\mid j\in{\cal X}_0\}$ is the  set of zeros
of the code $\tilde C_{{\cal X}_1}$.
Using reasoning similar to that in the proof of  Proposition \ref{p-adic},
one gets
\begin{equation*}
\tilde C_{{\cal X}_1}=\big\{\big(f(1),
\omega^{\frac{q-1}{4}r-1}f(\omega^{r}), \cdots,
\omega^{\frac{q-1}{4}(n-1)r-(n-1)}f(\omega^{(n-1)r})\big)\,|\,
f(X)\in \mathbb{F}_{q^2}[X]~~\hbox{and $\deg f(X)<\frac{n}{2}$}\big\}.
\end{equation*}
It is easy to see that
$\tilde C_{{\cal X}_1}\bigcap \mathbb{F}_q^n=C_{{\cal X}_1}$.
We are done.
\end{proof}

We conclude this discussion with some examples in Table 5.2.
The alternant codes (i)-(iii) correspond to the codes
(iv)-(vi) of Table~\ref{t1}, respectively,
using Proposition~\ref{duadic alter}, while the alternant codes (iv)-(v)
are derived from Proposition \ref{ex2}.

\smallskip
\begin{table}[h]\label{t2}
\begin{center}
{\footnotesize\renewcommand{\arraystretch}{2}
\begin{tabular}{|c|c|c|c|c|c|c|c|}\hline
No & $m$ & $q$ & $n$ & $r$ & alternant code & parameters\\ \hline
(i) & 2 & $3^2$ & 10  & 2 &
 $\big\{(f(1),\omega^{-1}f(\omega^{-2}),\cdots,\omega^{-9}f(\omega^{-18}))
 \,\big|\,f(X)\in \mathbb{F}_{3^4}[X],~\deg f(X)<5\big\}
 \big|_{\mathbb{F}_{3^2}}$   &[10,5,6] \\
(ii) & 2 & $5$ & 6  & 2 &
 $\big\{(f(1),\omega^{-1}f(\omega^{-2}),\cdots,\omega^{-5}f(\omega^{-10}))
 \,\big|\,f(X)\in \mathbb{F}_{5^2}[X],~\deg f(X)<3\big\}
 \big|_{\mathbb{F}_{5}}$   &[6,3,4] \\
(iii) & 2 & $7$ & 8  & 2 &
 $\big\{(f(1),\omega^{-1}f(\omega^{-2}),\cdots,\omega^{-7}f(\omega^{-14}))
 \,\big|\,f(X)\in \mathbb{F}_{7^2}[X],~\deg f(X)<4\big\}
 \big|_{\mathbb{F}_{7}}$   &[8,4,5] \\
(iv) & 2 & $3^2$ & 10  & 4 &
 $\big\{(f(1),\omega^{7}f(\omega^{4}),\cdots,\omega^{63}f(\omega^{36}))
 \,\big|\,f(X)\in \mathbb{F}_{9^2}[X],~\deg f(X)<5\big\}
 \big|_{\mathbb{F}_{9}}$   &[10,5,6] \\
 (v) & 2 & $17$ & 18  & 8 &
 $\big\{(f(1),\omega^{33}f(\omega^{8}),\cdots,\omega^{527}f(\omega^{136}))
 \,\big|\,f(X)\in \mathbb{F}_{17^2}[X],~\deg f(X)<9\big\}
 \big|_{\mathbb{F}_{17}}$   &[18,9,10] \\
\hline
\end{tabular}}
\begin{caption}
{Alternant codes from Type {\rm I} duadic constacyclic codes}
\end{caption}
\end{center}
\end{table}

\section*{Acknowledgements}

The main results of this work were obtained while the third author
was visiting the fourth author at Nanyang Technological University
in Jan-Feb 2014.
He is grateful for the hospitality and the support.
The research of Bocong Chen, Yun Fan and San Ling is supported
by NSFC with grant numbers 11271005 and 11171370.
 The research of
Bocong Chen and San Ling is also
partially supported by Nanyang Technological University's research
grant number M4080456.

\end{document}